\documentclass[11pt]{article}
\usepackage{amsfonts}
\usepackage{amsmath}
\usepackage{amsthm}
\usepackage{mathabx}
\usepackage{sgame}
\usepackage{graphicx}
\usepackage{pgfplots}
\pgfplotsset{compat=1.7}
\usetikzlibrary{intersections}
\usetikzlibrary{calc}
\usetikzlibrary{patterns}
\usepackage[normalem]{ulem}
\usepgfplotslibrary{fillbetween}
\usetikzlibrary{patterns}
\usepackage{comment}

\newtheorem{definition}{Definition}
\newtheorem{proposition}{Proposition}

\newtheorem{lemma}{Lemma}

\usepackage[textwidth=30mm]{todonotes}
\usepackage{enumerate}

\usepackage[colorlinks=true,breaklinks=true,bookmarks=true,urlcolor=blue,
     citecolor=blue,linkcolor=blue,bookmarksopen=false,draft=false]{hyperref}

\newcommand{\YT}[1]{\todo[color=red!50,author=\textbf{Yevgeny},inline]{\small #1\\}}
\newcommand{\GF}[1]{\todo[color=yellow!20,author=\textbf{Ga\"etan},inline]{\small #1\\}}
\newcommand{\AG}[1]{\todo[color=white, author=\textbf{Alberto},inline]{\small #1\\}}

\def\E{\mathbb{E}} 
\def\1{{1\hskip-2.5pt{\rm l}}} 
\def\a{\alpha} 
\DeclareMathOperator*{\argmax}{arg\,max}


\usepackage[square]{natbib}

\oddsidemargin=0in
\evensidemargin=0in
\textwidth=6in
\headheight=0pt
\headsep=0pt
\topmargin=0in
\textheight=8.5in

\begin{document}

\title{{Strategic flip-flopping in political competition}\thanks{The authors wish to thank the participants of the Game Theory seminar in IHP, the eco-lunch seminar in AMSE, the Workshop on Dynamic Games and Applications in Paris 2 and the Workshop on Stochastic Methods in Erice for their highly valuable comments.
YT acknowledges the support of the Israel Science Foundation grants \#2566/20, \#1626/18 \& \#448/22. GF thanks the “France 2030” investment plan managed by the French National Research Agency (reference :ANR-17-EURE-0020) and from Excellence Initiative of Aix-Marseille University -- A*MIDEX.
}}
\author{
Ga\"etan Fournier\thanks{Aix Marseille Univ, CNRS, AMSE, Marseille, France. \textsf{gaetan.fournier@univ-amu.fr}},  Alberto Grillo\thanks{Université Paris-Panthéon-Assas, LEMMA, Paris, France. \textsf{alberto.grillo@u-paris2.fr}},  Yevgeny Tsodikovich\thanks{Department of Economics, Bar Ilan University, Israel. \textsf{yevgets@gmail.com}}}

\maketitle

\thispagestyle{empty}

\begin{abstract}
We study candidates' positioning when adjustments are possible in response to new information about voters' preferences. Re-positioning allows candidates to get closer to the median voter but is costly both financially and electorally. 
We examine the occurrence and the direction of the adjustments depending on the ex-ante positions and the new information. In the unique subgame perfect equilibrium, candidates anticipate the possibility to adjust in response to future information and diverge ex-ante in order to secure a cost-less victory when the new information is favorable.
\end{abstract}

\noindent
\emph{JEL Classification}: C72, D72, D82

\noindent
\emph{Keywords}: Spatial voting, Imperfect information.

\newpage
\lineskip=1.8pt\baselineskip=18pt\lineskiplimit=0pt \count0=1

\section{Introduction}
In the run-up to an election, candidates largely rely on polls to learn about voting intentions. Uncovering the electorate's leanings, what is known as the \textit{policy mood} \citep{stimson1991public, stevenson2001economy}, help them shape their campaign message but may also pose a difficult choice: should they change their positions to get closer to voters? In the long run, it is well documented that politicians adapt their stances to reflect the evolving preferences of their constituencies \citep{glazer1985congressional,stratmann2000congressional,miler2016legislative}. Yet, policy changes that are too sudden involve substantial costs, in terms of not only communication efforts but also electoral appeal. In the history of the U.S. presidential elections, for example, the defeats of John Kerry in 2004 and Matt Romney in 2012 are often linked to their shifting views on salient issues\footnote{Specifically concerning Kerry's opposition to the war in Iraq after his previous support and Romney's multiple shifts, notably on abortion \citep{croco2016flipside}.}.

This paper studies re-positioning choices as a strategic game between candidates. We assume that voters prefer candidates with platforms aligned with their ideal policies, but dislike \textit{flip-floppers}, i.e. candidates who strategically change their positions during the campaign. We examine which candidates are more likely to adjust their positions following new information, in which direction, and how successfully. We also investigate how the anticipation of possible changes affects candidates' positions ex-ante, before the information is revealed. 

A theory of policy adjustments is useful in light of the bias of the empirical evidence. As pointed out by \cite{tomz2012political}, \textit{``historical data [...] reveal the consequences of re-positioning only in the specific circumstances when politicians thought re-positioning would be optimal"}. Our analysis of the involved trade-offs aims to clarify what these specific circumstances are.

\textbf{Model.}
We enrich the Downs-Hotelling framework by supposing an information shock creating a two-stage game. The shock reveals the location of the median voter. This captures the idea that voters' aggregate preferences fluctuate over time and that their current leanings are disclosed during the electoral campaign. In the model, two office-motivated candidates first select their positions before the shock, knowing only the distribution of the median voter's location. They can then revise their positions after the shock (after learning the actual location). 
We assume that such a change involves both an \emph{electoral cost} -- voters' discount their evaluation of a candidate who changed position -- and an \emph{organizational cost} -- the candidate's payoff is reduced due to the policy change.

On the one hand, the electoral cost represents voters' negative feeling toward a flip-flop. This aversion may be interpreted as a higher uncertainty about what the candidate would do if elected, i.e. to an undermined credibility of the position \citep{enelow1993elements}. Alternatively, voters may value consistency on policy issues as a cue for character \citep{kartik2007signaling}, or as a signal for quality of implementation of the ex-post policy. We model voters' dislike for flip-flops in a reduced form, through a penalty in their utility if a candidate changes position.

On the other hand, the organizational cost represents all other costs involved by a change of position which are not related to voters' reaction. The most prominent is the organizational and financial cost for candidates of communicating the change to the voters. While justifying a policy change to the public may reduce its electoral harm, such communication requires costly advertising. 

\textbf{Results.} 
We study the subgame-perfect equilibria of the game, using backward induction. In the second stage, if the revealed information is not clearly in favor of one candidate, the adjustment choices determine the winner of the election. The advantaged candidate would like to adjust only if the other candidate threatens the advantage by also moving, while the disadvantaged candidate would like to adjust only if his opponent does not. The only equilibrium is therefore in mixed strategies, with both candidates flip-flopping with positive probability and having a chance to win the election.  If instead the revealed information strongly favors one candidate, there are no incentives to flip-flop as the election cannot be disputed anymore.

In the first stage, candidates anticipate their strategic responses to the information shock. The game has at most one subgame-perfect equilibrium, which exists if both the electoral and the organizational costs are sufficiently high. In such equilibrium, candidates contradict the median voter theorem: they choose differentiated platforms to secure a cost-less victory when the information is in their favor. 

\textbf{Contribution.} Assuming sufficiently high adjustment costs, our model yields several implications concerning the occurrence of flip-flops. First, along the equilibrium path, re-positioning happens only toward the center, while candidates never adjust toward more extreme positions. This highlights a \textit{moderation effect}, according to which candidates cultivate separate electorates when elections are far in time and then soften their positions during the campaign. In the U.S., such an effect is often attributed to the presence of primary elections, in which candidates need to convince a more extreme median voter \citep{agranov2016flip}. Our framework provides a different rationale for a similar dynamic, which can play out even in the absence of primary elections.

A second prediction is that flip-flops consist mostly of small adjustments made by an advantaged candidate in order to secure his victory.  Only a minority of flip-flops are large adjustments made by a disadvantaged candidate who seeks to reverse the likely outcome of the election. Indeed, on the one hand, when the favorite candidate adjusts his position, the magnitude of the adjustment is smaller than when his opponent adjusts. On the other hand, we find that a candidate favored by the new information is more likely to adjust his position than a disadvantaged candidate. Hence, an adjustment by the advantaged candidate guarantees his victory but an adjustment by the challenger is more likely to be unsuccessful. 

Finally, we provide comparative statics results with respect to changes in the adjustment costs. We find that an increase in the electoral cost decreases the polarization of candidates but increases their equilibrium payoffs. Indeed, such an increase makes flip-flopping less likely, because it increases the probability that the election is secured after the information shock, which guarantees the favorite candidate a cost-less victory.

\textbf{Extension.} We look at how the results are modified by an asymmetry between candidates. Asymmetry in organizational costs has no impact on the results. Asymmetry in electoral costs makes the more flexible candidate choose a more central platform, while the less flexible candidate offers a more polarized platform. We show that a candidate who is significantly less flexible in adjusting his position loses the election in equilibrium, even if he is favored by the information on voters. In the general asymmetric game, payoffs are decreasing in candidates' own electoral cost, but increasing in the opponent's electoral cost.

\textbf{Literature.} Following the seminal work by \cite{hotelling1929stability} and \cite{downs1957economic}, most models of political competition assume that candidates can freely commit to any electoral platform. At the extreme opposite, some papers take campaign announcements as cheap-talk and assume that, once elected, candidates act according to their own preferences \citep{alesina1988credibility,osborne1996model,besley1997economic}.

We lay down a more realistic framework, in which platforms are binding although not immutable, and candidates can costly adjust them over time. The evidence that politicians respond in an adaptive way to changes in voters' preferences is substantial in political science, see \cite{stimson1995dynamic}, \cite{adams2004understanding}, and \cite{kousser2007ideological}. \cite{karol2009party} argues that elite replacement is not necessary for parties' policy changes, which are often driven by incumbent politicians. \cite{adams2006niche} find that mainstream parties engage in re-positioning more than niche parties, while \cite{tavits2007principle} relates the likelihood of success to whether the change concerned pragmatic or principled issues. This literature has a strong empirical focus and recently turned also to experimental settings \citep{tomz2012political,doherty2016changing,robison2017role}. We see our theoretical investigation as a useful complement to this existing research.

Key to our analysis is the assumption that policy changes are costly. Models by \cite{bernhardt1985candidate}, \cite{ingberman1989reputational} and \cite{enelow1993elements} consider a voters' utility function which is decreasing in the size of a change in candidates' policies. In these studies, the loss is derived from the increased uncertainty concerning the policy that will be effectively implemented. Evidence of an intrinsic preference of voters for consistency is also discussed as a ``waffle effect'' in political psychology \citep{carlson1985waffle}. \cite{hoffman1984political} found that even proposing a policy in agreement with the voters' preferences may not be rewarded if it follows an inconsistent track-record. Our specification is consistent with \cite{debacker2015flip}, who finds empirically that the electoral cost is increasing in the size of the change. The organizational cost, instead, captures in a reduced form the idea that informative advertising directed towards voters is costly for candidates \citep{coate2004political,ashworth2006campaign}. Our model abstracts away the role of interest groups in financing political campaigns, and assumes for simplicity that candidates bear communications costs themselves.\footnote{See \cite{prato2019campaign} for a similar assumption.}

Given the result of ex-ante divergence, our paper belongs to a class of models in which policy differentiation is adopted to soften competition in a second dimension. The baseline argument is familiar in industrial organization models, in which firms differentiate their products in order to reduce the subsequent competition in prices \citep{tirole1988theory,shaked1982relaxing}. In political competition, \cite{ashworth2009elections} and \cite{zakharov2009model} examine the reduced need for differentiated candidates to invest in valence. \cite{eyster2007party} consider parties as collections of candidates. Parties choose general platforms first and candidates can deviate from their party’s platform at a cost, in order to increase their chances in their specific constituency. Divergence of parties’ platforms results from the incentive to minimize the aggregate cost of candidates’ re-positioning for the party. In \cite{balart2022technological}, candidates diverge to minimize future costly advertising, which would be needed to impress voters if the ideological divide were low. We focus instead on candidates’ choice of changing their own position over time, when doing this is not only financially costly but also penalized electorally. While ex-ante divergence arises analogously to these previous models in the literature, our framework allows us to highlight the emerging patterns of flip-flopping along the (subgame-perfect) equilibrium.

The paper is more loosely related to three other research lines. First, flip-flopping has been investigated in two-stage models with primary elections \citep{hummel2010flip,agranov2016flip}. Although we do not have primary elections, our model exhibits a similar dynamic of moderation during the campaign stage. Second, the paper shares with \cite{kamada2020optimal} an interest in the dynamic aspect of policy announcements during the campaign stage. In their paper, however, the opportunity to revise a position arises stochastically in a revision game \`a la \cite{kamada2020revision}, and the focus is on candidates' reactions to each others' announcements. Finally, policy adjustments in our model can be seen as a pandering phenomenon. However, such pandering only describes a spatial movement and ignores considerations of optimality from a common-value perspective, addressed in \cite{maskin2004politician} and \cite{andreottola2021flip}.

\textbf{The rest of the paper is organized as follows.}
Section~\ref{Sect:Model} presents the model and Section~\ref{Sect:MainResult} the equilibrium analysis and the results. We study the case of asymmetric candidates in Section~\ref{Sect:ASYM_candidates}. Technical proofs and lemmas are postponed to the appendix.

\section{The Model}\label{Sect:Model}

There are two office seeking candidates (namely, candidate~$1$ and $2$) and a continuum of voters that we identify with their ideal policy $t \in \mathbb{R}$.  The location of the median voter $m$ is drawn according to a uniform distribution on a compact interval, w.l.o.g. fixed to $[0,1]$, and is revealed during the campaign.  Candidates choose their platforms twice, both before and after the reveal of the median voter's position. We denote the ex-ante platforms $x_1$ and $x_2$, and the ex-post platforms $y_1$ and $y_2$. 

We interpret the choice of the ex-post platforms as the adjustment that candidates may make with respect to their previous positions after learning the electorate's preferences more precisely. The choice of the ex-ante platforms reflects instead the positions in which candidates invest over the long term, knowing that future revisions are allowed but costly.

Voter $t$'s utility from a victory of candidate $i$ depends on both the candidate's ex-ante and the ex-post platform according to the following functional form
\begin{equation} \label{Eq:utilityvoter}
u_t(x_i,y_i)=-(t-y_i)^2-a(y_i-x_i)^2.
\end{equation}
The first term represents voters' preference for a candidate whose final policy is closer to their ideal one. The second term represents how voters penalize candidates for changing their platform with respect to their previous position.  
In our model, voters trust that policy $y_i$ will be implemented if candidate $i$ wins, hence do not care about the distance between $t$ and $x_i$. Yet, they have an intrinsic preference for candidates who are consistent. The parameter $a>0$ measures the relative effect of this electoral penalty and thus how voters trade off policy considerations with their dislike for a candidate changing position. 

The election is decided by majority rule. Because voters' preferences are single-peaked and their utilities have the same functional form, the candidate that attracts the median voter wins the election.
The payoff of the candidates depends on both the outcome of the election and the possible organizational cost of a policy change. The benefit from winning the election is set to $1$ and that from losing to $0$.
In addition, a candidate that changed his policy, by selecting $y_i\neq x_i$, pays a fixed cost $\phi \in (0,\frac12)$.\footnote{We assume that the cost $\phi$ is smaller than $\frac12$, which is the expected gain from competing in the election since players are symmetric. If $\phi>0.5$, saving the organizational costs is more important than competing for the election.} We interpret $\phi$ as the cost induced by the need to communicate the change of policy to the public. Although greater changes may require more substantial communication strategies, we abstract from the dependency of the organizational cost on the magnitude of the change for technical simplicity. Our assumption of a fixed $\phi$ captures the presence of some fixed cost, which is likely present independently of the magnitude of the change.
The payoff of candidate $i$ is then
\[ g_i(\textbf{x},\textbf{y})=\E(\1_{i \text{ wins}}-\phi \1_{y_i\neq x_i}),\]
where $\1_{i\text{ wins}}$ is the indicator that candidate $i$ wins the election, and the expectation is taken with respect to the position of the median voter $m$ and the possibly mixed actions of the players.
Ties are broken by a toss of a fair coin, so in case of a tie, $\E(\1_{i\text{ wins}})=\frac12$.

To summarize, the timing of the game is as follows.
\begin{enumerate}
\item \textbf{First Stage:} The two candidates choose ex-ante platforms $x_1$ and $x_2$.
\item \textbf{Information Shock:} The position of the median voter $m$ is revealed.
\item \textbf{Second Stage:} The two candidates choose ex-post platforms $y_1$ and $y_2$.
\item \textbf{The Voting:} The candidate's preferred by the median voter is elected and candidates' payoffs are realized.
\end{enumerate}

\section{Equilibrium analysis}\label{Sect:MainResult}

We are interested in the subgame-perfect equilibrium of the game. Hence, we solve the game using backward induction and start by studying the strategic flip-flopping of candidates after their observation of the ex-ante platforms $x_1,x_2$ and the location of the median voter $m$ (Propositions~\ref{prop:second_stage} and~\ref{prop:r1=r2}). The analysis shows that two scenarios are possible in the second stage: either the advantage of one candidate is too large and he is sure to win, or the election is still open and the outcome depends on candidates' reactions. In the first case, it is optimal for both candidates not to change platforms and remain at their ex-ante positions. In the second case, both candidates mix between moving to an ex-post optimal position and not moving. 

Next, given the strategies in the second stage, we analyze the optimal choice of ex-ante platforms. We show that for sufficiently high values of the costs $(a,\phi)$, the unique subgame-perfect equilibrium requires candidates to invest in divergent ex-ante positions (Proposition~\ref{prop:first_stage}). Candidates differentiate in order to maximize the chances of having a sufficient advantage in the second stage, for which they save on the organizational cost of changing policy. In the remaining region of the parameter space (see Figure~\ref{fi:3regions}), a subgame-perfect equilibrium does not exist but an $\epsilon$-equilibrium exists in which candidates take centrist positions one $\epsilon$ away from each other (Proposition~\ref{prop:noEQinR1}). Finally, we summarize the implications in terms of flip-flopping behavior that emerge from our equilibrium analysis (Proposition~\ref{prop:implications} and~\ref{prop:comparative}).

\subsection{Second Stage: Choice of Ex-Post Platforms $y_1$, $y_2$}\label{Subsect:x_selection}
We take $x_1$, $x_2$ and $m$ as given and consider separately the cases where the ex-ante platforms are the same or different, as the analysis in these two cases differs significantly.

\subsubsection{Different Ex-Ante Platforms $x_1\neq x_2$}\label{311}
Let us assume first $x_1 \neq x_2$ and consider, without loss of generality, the case $x_1<x_2$. Given $m$, we refer to the candidate with ex-ante platform closer to $m$ as the \emph{favorite} candidate.

\begin{definition}\label{def:favorite}
Candidate $i$ is the \emph{favorite} and candidate $j$ is the \emph{challenger} if $|x_i-m|<|x_j-m|$.
\end{definition}

The intuition for the term \emph{favorite} is that, if candidates do not change their platforms, the favorite candidate $i$ wins the election as $u_m(x_i,x_i)>u_m(x_j,x_j)$. We ignore the case where $m=\frac{x_1+x_2}{2}$ which occurs with null probability. 

In the case where $m = x_1$, candidate~$1$ can win the election without changing platform and incurring the organizational cost $\phi$, as the median voter has the highest possible utility from his victory. Candidate~$2$, on the other hand, loses the election regardless of his ex-post platform, so it is optimal for both candidates to not move and save on the organizational cost. This argument remains true when $m \neq x_1$ but is close enough to $x_1$: candidate~$1$ still wins the election without changing his platform, and candidate~$2$ also does not change his platform as he loses the election anyway.
In this case, we say that the favorite candidate has \emph{secured} the election.

\begin{definition}\label{def:unquest_favorite}
The election is \emph{secured} for the favorite candidate $i$ if $u_m(x_i,x_i)>\max\limits_{y\in[0,1]} u_m(x_j,y)$.
Otherwise, the election is \emph{still open}.
\end{definition}

A favorite candidate who has secured the election has a strictly dominant strategy $y_i=x_i$, by which he wins without paying the organizational cost $\phi$.
The challenger also has a strictly dominant strategy $y_j=x_j$, as he cannot win the election and should save the organizational cost. Hence, if the the election is secured, the strategies $y_1=x_1$, $y_2=x_2$ constitute the unique equilibrium of the second stage subgame. 
The equilibrium payoffs are $1$ for the favorite candidate and $0$ for the challenger. In Lemma~\ref{cl:mmnn} in the appendix, we solve the inequality that appears in Definition~\ref{def:unquest_favorite} and show  that the election is secured for candidate~$1$ whenever $m\in (\underline{m},\overline{m})$, where
$$\underline{m}:= \tfrac{\a x_1 - x_2}{\a-1} \vee 0,\qquad\overline{m}:= \tfrac{\a x_1 + x_2}{\a+1}, $$
and where
\begin{equation}
\a=\sqrt{\tfrac{1+a}{a}}
\end{equation}
is a useful parameter.
Similarly, 
the election is secured for candidate~$2$ whenever $m\in (\underline{n},\overline{n})$ where 
$$\underline{n}:= \tfrac{\a x_2 + x_1}{\a+1}, \qquad \overline{n}:= \tfrac{\a x_2 - x_1}{\a-1} \wedge 1.$$
Figure~\ref{fig:rep_leadership} summarizes the regions where each candidate has secured the election, and those in which the election is still open.
 We observe that the election is secured if $m$ realizes in a neighborhood of each candidate's ex-ante platform $x_i$.

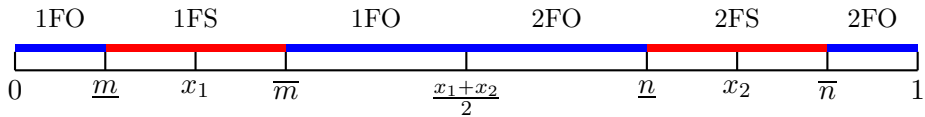
\begin{figure}[h]
\centering
\begin{tikzpicture}[x=1.2cm]
\draw[black,thick,>=latex]
  (0,0) -- (10,0);

\node[below,align=left,anchor=north,inner xsep=0pt]
  at (0,0) {$0$};	
\node[below,align=left,anchor=north,inner xsep=0pt]
  at (1,0) {$\underline{m}$};
\node[below,align=left,anchor=north,inner xsep=0pt]
  at (2,0) {$x_1$};
\node[below,align=left,anchor=north,inner xsep=0pt]
  at (3,0) {$\overline{m}$};
\node[below,align=left,anchor=north,inner xsep=0pt]
  at (5,0) {$\tfrac{x_1+x_2}{2}$};
\node[below,align=left,anchor=north,inner xsep=0pt]
  at (7,0) {$\underline{n}$};
\node[below,align=left,anchor=north,inner xsep=0pt]
  at (8,0) {$x_2$};
\node[below,align=left,anchor=north,inner xsep=0pt]
  at (9,0) {$\overline{n}$};
\node[below,align=left,anchor=north,inner xsep=0pt]
  at (10,0) {$1$};

\foreach \Xc in {0,1,2,3,5,7,8,9,10}
{
  \draw[black,thick]
    (\Xc,0) -- ++(0,7pt) node[above] {};
}

 \fill[red]
    (1,0.25)
      rectangle node[above, color=black] {\strut\small 1FS}
    (3,0.35);

 \fill[red]
    (7,0.25)
      rectangle node[above, color=black] {\strut\small 2FS}
    (9,0.35);

 \fill[blue]
    (0,0.25)
      rectangle node[above, color=black] {\strut\small 1FO}
    (1,0.35);

 \fill[blue]
    (3,0.25)
      rectangle node[above, color=black] {\strut\small 1FO}
    (5,0.35);

 \fill[blue]
    (5,0.25)
      rectangle node[above, color=black] {\strut\small 2FO}
    (7,0.35);

 \fill[blue]
    (9,0.25)
      rectangle node[above, color=black] {\strut\small 2FO}
    (10,0.35);

\end{tikzpicture}
\caption{An illustration of the status of each player being the favorite while the election is secured (FS) or still open (FO)  
before choosing the ex-post platform, with respect to the position of the median voter $m \in [0,1]$. 
} \label{fig:rep_leadership}
\end{figure}

If the election is still open, the favorite candidate wins the election if neither candidate changes his platform. However, he can lose the election if he keeps the platform while the challenger moves closer to the median voter. Hence, the challenger wants to move if the favorite does not change his platform. Analogously, the favorite also wants to move if the challenger moves, in order to secure his victory. In Lemma~\ref{cl:xstar} in the appendix, we show that in this case both candidates have an optimal platform $\hat{y_i}$ which they want to adopt if they decide to change their ex-ante platform. The optimal platform is given by

\begin{equation}\label{Eq:opt_plat}
\hat{y_i}=\frac{m+ax_i}{1+a}.
\end{equation}

This optimal platform is a weighted average between the ex-ante platform $x_i$ and the realization of the median voter location $m$. For each candidate, the platform $\hat{y_i}$ is optimal in the sense that all other platforms different from $x_i$ are either 
dominated or redundant, as proved also in Lemma~\ref{cl:xstar}. 
We can represent the ex-post game between the two candidates by the following $2\times 2$ one-shot game, assuming that each candidate chooses between moving to $\hat{y_i}$ and not moving. Without loss of generality, we consider candidate~$1$ as the favorite.

\begin{table}[h!]
\begin{center}
\begin{game}{2}{2}[Favorite candidate~$1$][Challenger candidate~$2$]
& $\hat{y_2}$ & $x_2$ \\
$\hat{y_1}$ &$(1-\phi,-\phi)$ &$(1-\phi,0)$\\
$x_1$ &$(0,1-\phi)$ &$(1,0)$
\end{game}
\caption{The normal form game that candidates face if the election is still open.}\label{tbl:2x2when_no_UQF}
\end{center}
\end{table}
If both candidates move to their optimal platforms, candidate~$1$ wins the election: his ex-ante platform is closer to the median voter, so his ex-post platform is both closer to the median voter and requires a smaller adjustment $\vert \hat{y_1}-x_1 \vert <\vert \hat{y_2}-x_2 \vert$. 
We observe that the favorite candidate wants to take the same action as the challenger, while the challenger wants to take the opposite action as the favorite. 
Hence, such a game has no equilibrium in pure strategies but has a unique equilibrium in mixed strategies. At this equilibrium, the favorite changes his platform with probability $(1-\phi)$ and the challenger changes his platform with probability $\phi$. The expected equilibrium payoffs are $1-\phi$ for the favorite candidate and $0$ for the challenger. 

This discussion is summarized in the following proposition.


\begin{proposition}[Equilibrium in the second stage subgame for $x_1 \neq x_2$]\label{prop:second_stage}
Suppose that different ex-ante platforms $x_1\neq x_2$ were chosen. The reaction of the candidates to the revelation of $m$ is:
\begin{itemize}
\item If the favorite candidate has secured the election, the unique subgame equilibrium is $(y_1,y_2)=(x_1,x_2)$. The equilibrium payoffs are $1$ for the favorite and $0$ for the challenger.
\item If the election is still open, the unique subgame equilibrium is in mixed strategies:
$$y_i= \begin{cases}  \hat{y_i} &\text{ with probability } 1-\phi, \\  x_i &\text{ with probability } \phi, \end{cases}$$ 
for the favorite candidate $i$ and
$$y_j=\begin{cases}\hat{y_j} &\text{ with probability } \phi,\\  x_j &\text{ with probability } 1-\phi, \end{cases}$$ for the challenger $j$. The equilibrium payoffs are $(1-\phi)$ for the favorite and $0$ for the challenger. 
\end{itemize}
\end{proposition}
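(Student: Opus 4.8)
The plan is to split along the dichotomy of Definition~\ref{def:unquest_favorite} (secured versus still open), which by Lemma~\ref{cl:mmnn} corresponds exactly to the $m$-regions of Figure~\ref{fig:rep_leadership}, and to treat each case by reducing the continuous second-stage game to a finite one. In the secured case I would argue directly by strict dominance; in the open case I would first collapse the continuum of ex-post platforms to the binary choice $\{x_i,\hat y_i\}$ using Lemma~\ref{cl:xstar}, then analyze the resulting $2\times 2$ game of Table~\ref{tbl:2x2when_no_UQF}.

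For the secured case, let $i$ be the favorite. The key observation is that the definition $u_m(x_i,x_i)>\max_{y}u_m(x_j,y)$ says that candidate~$i$, by playing $y_i=x_i$, wins \emph{against every} ex-post platform of the challenger and pays no organizational cost, so $y_i=x_i$ yields the maximal payoff $1$; any $y_i\neq x_i$ yields at most $1-\phi<1$. Hence $y_i=x_i$ is strictly dominant for the favorite. Fixing this, the challenger can never satisfy $u_m(x_j,y_j)>u_m(x_i,x_i)$, so he loses whatever he does and strictly prefers to save $\phi$ by playing $y_j=x_j$ (strictness comes from $\phi>0$). Iterated elimination of strictly dominated strategies then leaves the unique profile $(x_1,x_2)$ with payoffs $(1,0)$ assigned to $(\text{favorite},\text{challenger})$.

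For the still-open case, I would invoke Lemma~\ref{cl:xstar} to restrict attention to $\{x_i,\hat y_i\}$ for each candidate, since every other ex-post platform is dominated or redundant. The substance is then to verify the four payoff entries of Table~\ref{tbl:2x2when_no_UQF}. The cells where at least one candidate plays $x_i$ and the favorite wins follow from $|x_1-m|<|x_2-m|$; the delicate comparisons are the two remaining ones. For $(\hat y_1,\hat y_2)$, I would show the favorite still wins by comparing maximal attainable utilities: a short computation gives $\max_y u_m(x_i,y)=u_m(x_i,\hat y_i)=-\tfrac{a}{1+a}(m-x_i)^2$, which is strictly decreasing in $|m-x_i|$, so the favorite's smaller distance yields $u_m(x_1,\hat y_1)>u_m(x_2,\hat y_2)$. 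For $(x_1,\hat y_2)$, the challenger wins precisely because the election is \emph{not} secured, i.e.\ $\max_y u_m(x_2,y)=u_m(x_2,\hat y_2)>u_m(x_1,x_1)$, which is the negation of Definition~\ref{def:unquest_favorite}.

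Finally I would solve the $2\times 2$ game. The best-response structure is cyclic (the favorite wants to match the challenger's move/stay decision, the challenger wants to mismatch), which rules out any pure equilibrium and guarantees a unique mixed one. Writing $p$ and $q$ for the probabilities that the favorite and the challenger respectively move to $\hat y_i$, the challenger's indifference condition $p(-\phi)+(1-p)(1-\phi)=0$ gives $p=1-\phi$, and the favorite's condition $1-\phi=1-q$ gives $q=\phi$; substituting back yields equilibrium payoffs $1-\phi$ and $0$. I expect the main obstacle to be rigor in the uniqueness claim over the original continuum rather than the finite reduction: one must be sure that Lemma~\ref{cl:xstar} genuinely excludes equilibria placing weight on platforms other than $x_i$ and $\hat y_i$, so that the unique mixed equilibrium of the finite game is the unique equilibrium of the full subgame.
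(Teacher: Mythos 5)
Your proposal is correct and follows essentially the same route as the paper: strict dominance in the secured case, and in the still-open case a reduction to the $2\times 2$ game of Table~\ref{tbl:2x2when_no_UQF} via Lemma~\ref{cl:xstar} followed by the same indifference conditions $p=1-\phi$, $q=\phi$. Your extra verification of the payoff entries and your worry about uniqueness over the continuum are exactly what Lemma~\ref{cl:xstar} is there to settle, so nothing is missing.
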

\begin{proof}
See Appendix~\ref{Subse:second_stage}.\hfill\end{proof}

If the the election is secured, the uniqueness of the equilibrium is clear, as $(x_1,x_2)$ are strictly dominant strategies. 
Otherwise, the uniqueness concerns the equilibrium payoff but not the equilibrium strategies, as in some configurations other strategies might be redundant to the strategies $\hat{y_i}$ and constitute an equilibrium with the same payoffs. As shown in Lemma~\ref{cl:xstar}, disregarding these strategies is without loss of generality because it does not impact the payoffs nor the analysis.

The comparison of the subgame equilibrium payoffs when the election is secured or open illustrates the inefficiency of the second scenario.
When the election is still open, the favorite gets $(1-\phi)$ and the challenger $0$ (in expectation), which is the same payoffs as if the challenger were to concede but the favorite candidate still paid the organizational cost to change his policy.

\subsubsection{Identical Ex-Ante Platforms $x_1=x_2$}
In the case of identical ex-ante platforms, the election is always open and there is neither a favorite nor a challenger, which breaks down the analysis from Proposition~\ref{prop:second_stage}.
As before, both candidates have an optimal platform $\hat{y_i}$ where they want to move if they do. 
This optimal position is again equal to $\hat{y_i}=\tfrac{m+ax_i}{1+a}$ and is now identical for both candidates. Hence, if both candidates move or both do not move, each has a probability of $\tfrac{1}{2}$ to win the election. We can then restrict our attention to the ex-post game given by the $2 \times 2$ matrix in Table~\ref{tbl:2x2when_r1=r2}.

\begin{table}[h!]
\begin{center}
\begin{game}{2}{2}[Candidate~$1$][Candidate~$2$]
& $\hat{y_2}$ & $x_2$ \\
$\hat{y_1}$ &$(\frac12-\phi,\frac12-\phi)$ &$(1-\phi,0)$\\
$x_1$ &$(0,1-\phi)$ &$(\frac12,\frac12)$
\end{game}
\caption{The normal form game that candidates face if the ex-ante platforms are identical.}\label{tbl:2x2when_r1=r2}
\end{center}
\end{table}

Since $\phi<\tfrac{1}{2}$, we have $\tfrac{1}{2}-\phi>0$ and $1-\phi>\tfrac{1}{2}$. It follows that keeping the ex-ante platform is a strictly dominated strategy and in the unique equilibrium both candidates choose platforms $\hat{y_1}=\hat{y_2}$. The above discussion proves the following Proposition:

\begin{proposition}[Equilibrium in the second stage of the game for $x_1 = x_2$]\label{prop:r1=r2}
Suppose that ex-ante, the identical platforms $x_1=x_2$ were chosen. The unique subgame equilibrium in the second stage is $(\hat{y_1},\hat{y_2})$ (given by Eq.~\eqref{Eq:opt_plat}). The equilibrium payoffs are $(\frac12-\phi,\frac12-\phi)$.
\end{proposition}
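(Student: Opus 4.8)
The plan is to reduce the second-stage subgame from a continuum of ex-post platform choices to the binary $2\times 2$ game in Table~\ref{tbl:2x2when_r1=r2}, and then solve that game by strict dominance. The first step is to invoke Lemma~\ref{cl:xstar}: it guarantees that each candidate has a single optimal ex-post platform $\hat{y_i}=\tfrac{m+a x_i}{1+a}$ whenever the candidate moves at all, and that every platform distinct from $x_i$ is either dominated by $\hat{y_i}$ or payoff-redundant. This justifies restricting attention to the two actions ``move to $\hat{y_i}$'' and ``stay at $x_i$'' for each candidate, exactly as in the $x_1\neq x_2$ case, so that the induced game is the $2\times 2$ matrix displayed in the excerpt.

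Next I would fill in the payoff entries, highlighting the one structural feature specific to $x_1=x_2$. Since the ex-ante platforms coincide, Eq.~\eqref{Eq:opt_plat} gives $\hat{y_1}=\hat{y_2}$; consequently, whenever both candidates take the same action (both move to their common $\hat{y}$, or both stay at their common $x$), they offer the median voter identical final platforms, $u_m$ is equal for the two of them, and the fair-coin tie-break yields an election-winning probability of $\tfrac12$ for each. This produces the diagonal entries: the top-left cell pays $(\tfrac12-\phi,\tfrac12-\phi)$ (tie, both paying the organizational cost) and the bottom-right cell pays $(\tfrac12,\tfrac12)$ (tie, neither paying). The off-diagonal cells are as in the favorite/challenger analysis: the mover wins outright and nets $1-\phi$, while the stayer loses and gets $0$.

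Finally, I would establish strict dominance using the standing assumption $\phi<\tfrac12$. For each candidate, compare ``move'' against ``stay'' holding the opponent fixed: if the opponent moves, moving yields $\tfrac12-\phi$ versus $0$ for staying, and $\tfrac12-\phi>0$; if the opponent stays, moving yields $1-\phi$ versus $\tfrac12$, and $1-\phi>\tfrac12$. Both comparisons are strict precisely because $\phi<\tfrac12$, so ``move to $\hat{y_i}$'' strictly dominates ``stay'' for each candidate independently of the opponent's choice. Strict dominance gives a unique surviving action profile, namely $(\hat{y_1},\hat{y_2})$, with payoffs $(\tfrac12-\phi,\tfrac12-\phi)$, as claimed. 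The only genuinely non-routine ingredient is the reduction in the first step, which is entirely delegated to Lemma~\ref{cl:xstar}; conditional on that lemma, the remainder is a two-line dominance check, so the main obstacle is ensuring that the lemma's ``dominated or redundant'' dichotomy is correctly applied here despite $\hat{y_1}$ and $\hat{y_2}$ now coinciding, and that the coincidence is what turns the asymmetric payoffs of Table~\ref{tbl:2x2when_no_UQF} into the symmetric tie payoffs of Table~\ref{tbl:2x2when_r1=r2}.
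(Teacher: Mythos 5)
Your proposal is correct and follows essentially the same route as the paper: reduce to the $2\times 2$ game of Table~\ref{tbl:2x2when_r1=r2} via the optimal-adjustment argument of Lemma~\ref{cl:xstar}, observe that coinciding platforms turn the off-diagonal favorite/challenger payoffs into symmetric ties on the diagonal, and conclude by strict dominance of ``move'' using $\phi<\tfrac12$. Your explicit caveat about re-applying the lemma when $\hat{y_1}=\hat{y_2}$ is a point the paper itself glosses over with ``as before,'' so nothing is missing.
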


Note that $\frac12-\phi$ is the worst possible equilibrium payoff in this game. Indeed, no matter the strategy of his opponent, candidate $i$ can select $x_i=\frac12$ and $y_i=\hat{y_i}$.
By doing so he wins the election with probability of at least $\frac12$, while paying the organizational cost $\phi$.

\subsection{First Stage: Choice of Ex-Ante Platforms $x_1$, $x_2$}\label{Subsect:r_selection}
By backward induction, candidates choose their ex-ante platforms by considering that the subgame equilibrium given by Proposition~\ref{prop:second_stage} or~\ref{prop:r1=r2} is played in the second stage. In the first stage, we restrict attention to pure strategies $x_i$ for both candidates. 

A first result is that there cannot exist a subgame-perfect equilibrium in which candidates select the same position in the first stage. If the ex-ante positions were identical, each candidate would have a profitable deviation by playing an infinitesimally different ex-ante position, since such a deviation induces a better equilibrium payoff after the second stage. 
We prove this claim within the proof of Proposition~\ref{prop:noEQinR1} below but anticipating this result allows us to focus on ex-ante platforms that properly define a favorite and a challenger, as in Section~\ref{311}. 

The next proposition shows that if $\phi$ and $a$ are sufficiently big, then the ex-ante platforms diverge from the center in the unique subgame-perfect equilibrium. The intuition behind this result is that,  conditional on being the favorite, candidates prefer the election to be secured rather than still open, when the position of $m$ is revealed. Divergence occurs because the probability of a secured election is proportional to the distance between the ex-ante platforms $|x_2-x_1|$.
More precisely, given the unique subgame equilibrium described in Proposition~\ref{prop:second_stage}, each candidate $i$'s expected payoff  is:
\begin{multline}\label{eq:payoff}
1 \times \mathbb{P}(i \text{ is the favorite and the election is secured}) +\\
+ (1-\phi) \times \mathbb{P}(i \text{ is the favorite and the election is still open}) + 0
\end{multline}
where the probability $\mathbb{P}$ represents the randomization of the median voter's location and the $0$ is the expected payoff if $i$ is the challenger. On the one hand, player prefer to be the favorite,
which creates an incentive to move towards the opposite candidate in order to increase the likelihood of being the closest candidate to $m$. 
On the other hand, being the favorite is not the only concern of the candidates, because they also want to maximize the probability of the election being secured conditional on being the favorite. The fact that the probability that the election is secured for the favorite is proportional to the distance between ex-ante platforms $|x_2-x_1|$ creates an incentive to diverge. Indeed, by differentiating their platforms, candidates create ``secure electorates'' -- regions of the policy space which guarantee a victory of a candidate if the median voter is revealed in such a region without the need of a costly adjustment of platform. This centrifugal force is to be traded off with the centripetal force and prevails in equilibrium if the condition on $\phi$ in Proposition~\ref{prop:first_stage} holds.

\begin{proposition}[Differentiation of ex-ante platforms]\label{prop:first_stage}
Suppose that $\phi>\frac{1}{1+4\sqrt{a(1+a)}}$ and without loss of generality that $x_1 \leq x_2$.
In the unique subgame-perfect equilibrium, candidates' ex-ante platforms are
\begin{equation}
(x_1^*,x_2^*)=\left(\frac{1}{\a+1},\frac{\alpha}{\a+1}\right)
\end{equation}
where $\alpha=\sqrt{\frac{1+a}{a}}$, and the ex-post behavior is according to Propositions~\ref{prop:second_stage} and~\ref{prop:r1=r2}.\\
The equilibrium expected payoffs are
\begin{equation}\label{Eq:Equilibrium_internal_payoff}
g_1^*=g_2^*= \frac{1}{2} - \frac{\phi}{2} \left(\frac{\a-1}{\a+1}\right)^2
\end{equation}
\end{proposition}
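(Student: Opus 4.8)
The plan is to fold the second-stage equilibrium payoffs from Propositions~\ref{prop:second_stage} and~\ref{prop:r1=r2} into a first-stage payoff function $g_i(x_1,x_2)$ and then solve the resulting one-shot game in pure strategies. By the argument recalled in the text (and proved within Proposition~\ref{prop:noEQinR1}), no equilibrium has $x_1=x_2$, so I may assume $x_1<x_2$ and determine each candidate's best response. The crucial simplification is that a candidate collects payoff only when he is the favorite: by Proposition~\ref{prop:second_stage} the challenger earns $0$ whether or not the election is secured. Hence, writing $F_1=[0,\tfrac{x_1+x_2}{2}]$ for the set of median locations on which candidate~$1$ is the favorite and $S_1=(\underline m,\overline m)$ for its secured sub-interval, and using that $m$ is uniform on $[0,1]$, candidate~$1$'s expected payoff is
\begin{equation*}
g_1(x_1,x_2)=|S_1|+(1-\phi)\,(|F_1|-|S_1|)=(1-\phi)\,\tfrac{x_1+x_2}{2}+\phi\,|S_1|,
\end{equation*}
and symmetrically for candidate~$2$ with $F_2=[\tfrac{x_1+x_2}{2},1]$ and $S_2=(\underline n,\overline n)$.

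Next I would compute the secured length $|S_1|$ from the expressions for $\underline m,\overline m$ in Lemma~\ref{cl:mmnn}. When the interval $(\underline m,\overline m)$ lies strictly inside $[0,1]$ one gets
\begin{equation*}
\overline m-\underline m=\frac{2\alpha}{\alpha^2-1}\,(x_2-x_1)=2\sqrt{a(1+a)}\,(x_2-x_1),
\end{equation*}
using $\alpha^2-1=1/a$ and $\alpha a=\sqrt{a(1+a)}$; this makes precise the claim that the secured region grows linearly with the platform distance. When instead $x_1$ is small enough that $\underline m=\tfrac{\alpha x_1-x_2}{\alpha-1}\le 0$, the secured region is truncated to $S_1=(0,\overline m)$ with $|S_1|=\overline m=\tfrac{\alpha x_1+x_2}{\alpha+1}$. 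Thus $g_1(\cdot,x_2)$ is piecewise linear in $x_1$ with a single interior kink at $x_1=x_2/\alpha$ (the value of $x_1$ at which $\underline m=0$).

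The best-response computation then reduces to reading off two slopes. On the truncated branch $x_1\le x_2/\alpha$ one has $\partial_{x_1}g_1=\tfrac{1-\phi}{2}+\phi\tfrac{\alpha}{\alpha+1}>0$, so candidate~$1$ wants to raise $x_1$; on the interior branch $x_2/\alpha\le x_1<x_2$ one has $\partial_{x_1}g_1=\tfrac{1-\phi}{2}-2\phi\sqrt{a(1+a)}$, and the hypothesis $\phi>\tfrac{1}{1+4\sqrt{a(1+a)}}$ is exactly equivalent to this slope being negative. Hence on $x_1<x_2$ the payoff is tent-shaped and maximized at the kink $x_1=x_2/\alpha$, where candidate~$1$'s secured region just touches the left boundary $0$. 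A mirror analysis (reflecting $[0,1]$ about $\tfrac12$) shows that if candidate~$1$ instead places himself to the right of $x_2$, his best within that region touches the boundary $1$ at $x_1=1-\tfrac{1-x_2}{\alpha}$; comparing the two candidate values shows that the left (resp.\ right) option is preferred when $x_2>\tfrac12$ (resp.\ $x_2<\tfrac12$). The condition on $\phi$ is therefore the precise threshold at which the centrifugal force (enlarging the secured electorate) dominates the centripetal one (being favorite more often).

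Finally I would close the fixed-point and compute the payoff. The same analysis for candidate~$2$ gives his best response $x_2=1-\tfrac{1-x_1}{\alpha}$ when $x_1<\tfrac12$. Imposing mutual best response on the configuration $x_1<\tfrac12<x_2$ yields the linear system $x_2=\alpha x_1$ and $\alpha x_2-x_1=\alpha-1$, whose unique solution is $(x_1^*,x_2^*)=\bigl(\tfrac{1}{\alpha+1},\tfrac{\alpha}{\alpha+1}\bigr)$; one checks directly that the other sign configurations ($x_1\ge\tfrac12$ together with $x_1<x_2$, or the tie $x_1=x_2$) admit no fixed point, giving uniqueness. Substituting $\underline m=0$ and $\overline m=\tfrac{2\alpha}{(\alpha+1)^2}$ into $g_1$ and simplifying $\tfrac12-\tfrac{2\alpha}{(\alpha+1)^2}=\tfrac{(\alpha-1)^2}{2(\alpha+1)^2}$ produces the stated payoff $g_1^*=g_2^*=\tfrac12-\tfrac{\phi}{2}\bigl(\tfrac{\alpha-1}{\alpha+1}\bigr)^2$. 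I expect the main obstacle to be bookkeeping rather than conceptual: one must verify global (not merely local) optimality across all regimes --- the two branches on each side, the boundary truncations at $0$ and $1$, and the comparison of the left and right tent peaks --- and confirm that none of the ``corner'' or platform-matching deviations beats the kink, so that the best-response correspondence is genuinely single-valued and its fixed point unique.
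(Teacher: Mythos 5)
Your proposal is correct and follows essentially the same route as the paper's proof: fold the second-stage payoffs into the piecewise-linear first-stage payoff $g_1=(1-\phi)\tfrac{x_1+x_2}{2}+\phi|S_1|$, locate the kink at $x_1=x_2/\alpha$ where $\underline m$ hits $0$, show the hypothesis on $\phi$ is exactly the condition making the interior slope negative so the payoff is tent-shaped, and solve the resulting pair of best-response equations to get $\bigl(\tfrac{1}{\alpha+1},\tfrac{\alpha}{\alpha+1}\bigr)$ and the stated payoff. The only cosmetic difference is that you rule out locating on the far side of the opponent by directly comparing the two tent peaks, whereas the paper uses the reflection identity $g_1(x_1,x_2)=g_1(1-x_1,1-x_2)$ together with monotonicity of $g_1$ in $x_2$; both arguments are equivalent in substance.
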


\begin{proof}
See Appendix~\ref{Subsect:first_stage}.
\hfill\end{proof}

In these equilibrium payoffs, the first term represents the expected gain of the election and the second term represents the expected organizational cost. Indeed, in equilibrium the election is still open with probability $\left(\frac{\alpha-1}{\alpha+1}\right)^2$ and in this case each player deviates with probability $1-\phi$ or $\phi$ whether he is the favorite or not, so he pays the cost $\phi$ with probability $\frac{1}{2}\phi + \frac{1}{2}(1-\phi) =\frac{1}{2}$.

Figure~\ref{fig:eq} shows the intervals on the policy space in which the election is secured for the favorite candidate or still open at the subgame-perfect equilibrium, which depend only on the electoral cost $a$ through the parameter $\a$.

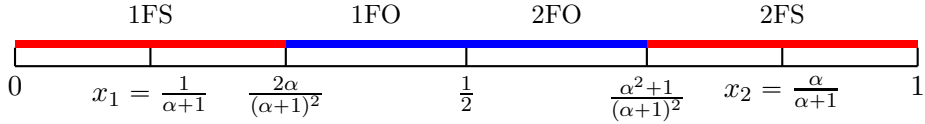
\begin{figure}[h!]
\centering
\begin{tikzpicture}[x=1.2cm]
\draw[black,thick,>=latex]
  (0,0) -- (10,0);

\node[below,align=left,anchor=north,inner xsep=0pt]
  at (0,0) {$0$};	
\node[below,align=left,anchor=north,inner xsep=0pt]
  at (1.5,0) {$x_1=\frac{1}{\a+1}$};
\node[below,align=left,anchor=north,inner xsep=0pt]
  at (3,0) {$\frac{2\alpha}{(\a+1)^2}$};
\node[below,align=left,anchor=north,inner xsep=0pt]
  at (5,0) {$\tfrac{1}{2}$};
\node[below,align=left,anchor=north,inner xsep=0pt]
  at (7,0) {$\frac{\a^2+1}{(\a+1)^2}$};
\node[below,align=left,anchor=north,inner xsep=0pt]
  at (8.5,0) {$x_2=\frac{\alpha}{\a+1}$};
\node[below,align=left,anchor=north,inner xsep=0pt]
  at (10,0) {$1$};

\foreach \Xc in {0,1.5,3,5,7,8.5,10}
{
  \draw[black,thick]
    (\Xc,0) -- ++(0,7pt) node[above] {};
}

 \fill[red]
    (0,0.25)
      rectangle node[above, color=black] {\strut\small 1FS}
    (3,0.35);

 \fill[red]
    (7,0.25)
      rectangle node[above, color=black] {\strut\small 2FS}
    (10,0.35);

 \fill[blue]
    (3,0.25)
      rectangle node[above, color=black] {\strut\small 1FO}
    (5,0.35);

 \fill[blue]
    (5,0.25)
      rectangle node[above, color=black] {\strut\small 2FO}
    (7,0.35);

\end{tikzpicture}
\caption{The status of each player being the favorite when the election is secured (FS) or still open (FO) at the subgame-perfect equilibrium described by Proposition~\ref{prop:first_stage}, with respect to the position of the median voter $m$.}
\label{fig:eq}
\end{figure}

The following proposition shows that if the condition on $a$ and $\phi$ in Proposition~\ref{prop:first_stage} is reversed, a subgame-perfect equilibrium does not exist. 
Instead, we prove the existence of an $\epsilon$-equilibrium, at which neither candidate can unilaterally improve his payoff by more than $\epsilon$, for every $\epsilon>0$.
\begin{proposition}\label{prop:noEQinR1}
If $0<\phi<\tfrac{1}{1+4\sqrt{a(1+a)}}$, there does not exist a subgame-perfect equilibrium. There exists instead an $\epsilon$-equilibrium, given by $(x_1,x_2)=(\tfrac{1}{2}-\epsilon,\tfrac{1}{2}+\epsilon)$ for every $\epsilon>0$ small enough.
\end{proposition}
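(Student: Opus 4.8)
The plan is to reduce everything to the first-stage payoff as a function of $(x_1,x_2)$ alone. Using the second-stage equilibria (Propositions~\ref{prop:second_stage} and~\ref{prop:r1=r2}) together with the decomposition~\eqref{eq:payoff}, I would first record candidate~$1$'s ex-ante payoff when $x_1<x_2$. Since $m$ is uniform on $[0,1]$, the probabilities in~\eqref{eq:payoff} are lengths of the intervals in Figure~\ref{fig:rep_leadership}: candidate~$1$ is the favorite on $(0,\tfrac{x_1+x_2}{2})$ and the election is secured for him on $(\underline m,\overline m)$. This yields
\[
g_1(x_1,x_2)=(1-\phi)\,\tfrac{x_1+x_2}{2}+\phi\,(\overline m-\underline m),
\]
where, invoking Lemma~\ref{cl:mmnn}, $\overline m-\underline m=\tfrac{2\alpha(x_2-x_1)}{\alpha^2-1}$ on the interior range $x_2\le \alpha x_1$ and $\overline m-\underline m=\tfrac{\alpha x_1+x_2}{\alpha+1}$ on the corner range $x_2>\alpha x_1$ (where $\underline m=0$). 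The symmetric formula for candidate~$2$ follows from the left--right symmetry $x\mapsto 1-x$.

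Next I would establish non-existence in two cases. For $x_1=x_2=c$, I reproduce the claim flagged before the statement: each candidate earns $\tfrac12-\phi$ (Proposition~\ref{prop:r1=r2}), whereas an infinitesimal move onto the larger of the two sides makes the deviator the favorite there, so his payoff tends to $(1-\phi)\max(c,1-c)\ge \tfrac12-\tfrac{\phi}{2}>\tfrac12-\phi$; hence a small such move is strictly profitable for every admissible $\phi$, and no equilibrium has $x_1=x_2$. For $x_1<x_2$ I differentiate the formula above. On the corner range $\partial g_1/\partial x_1=\tfrac{1-\phi}{2}+\tfrac{\phi\alpha}{\alpha+1}>0$ unconditionally; on the interior range $\partial g_1/\partial x_1=\tfrac{1-\phi}{2}-\tfrac{2\alpha\phi}{\alpha^2-1}$, and a short computation using $\alpha^2-1=1/a$ shows this is positive exactly when $\phi<\tfrac{1}{1+4\sqrt{a(1+a)}}$. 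Thus, under the hypothesis, $g_1$ is strictly increasing in $x_1$ on all of $(0,x_2)$, so from any profile with $x_1<x_2$ candidate~$1$ strictly gains by nudging $x_1$ toward the center; by symmetry candidate~$2$ gains by nudging $x_2$ inward. Together with the $x_1=x_2$ case, this rules out every pure profile, so no subgame-perfect equilibrium exists.

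Finally, for the $\epsilon$-equilibrium I would evaluate $g_1$ at $(\tfrac12-\epsilon,\tfrac12+\epsilon)$ — which lies in the interior range for $\epsilon$ small, namely $\epsilon<\tfrac{\alpha-1}{2(\alpha+1)}$ — obtaining $g_1=\tfrac12-\tfrac{\phi}{2}+\tfrac{4\alpha\phi}{\alpha^2-1}\,\epsilon$, and compare it with the supremum over candidate~$1$'s deviations. Since $g_1$ increases in $x_1$ up to $x_2$ from the left, decreases in $x_1$ away from $x_2$ on the right, and equals $\tfrac12-\phi$ exactly at $x_1=x_2$, that supremum is $(1-\phi)x_2=\tfrac12-\tfrac{\phi}{2}+(1-\phi)\epsilon$ and is not attained. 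The gain from any deviation is therefore strictly below $\epsilon\big[(1-\phi)-\tfrac{4\alpha\phi}{\alpha^2-1}\big]$, a positive constant (twice the interior derivative) times $\epsilon$ that is itself smaller than $\epsilon$; hence the centrist profile is an $\epsilon$-equilibrium for all small $\epsilon$.

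The step I expect to be the crux — and the one to write most carefully — is the second paragraph: getting the piecewise payoff formula right (in particular the corner where $\underline m$ is clamped at $0$) and verifying that the sign of the interior derivative reproduces exactly the threshold $\tfrac{1}{1+4\sqrt{a(1+a)}}$. The conceptual point tying the two conclusions together is that non-existence is driven by the discontinuity of $g_1$ at $x_1=x_2$: each candidate wants to sit just inside his opponent, the optimum is an open-set supremum that is never attained, and this same gap, now of order $\epsilon$, is precisely what downgrades the symmetric centrist profile from an exact equilibrium to an $\epsilon$-equilibrium.
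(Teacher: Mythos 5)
Your proposal is correct and follows essentially the same route as the paper's proof: it derives the same piecewise-linear expression for $g_1$, uses the sign of the interior derivative under $\phi<\Psi(a)$ to show each candidate always gains by nudging toward the opponent (so no profile with $x_1\neq x_2$ or $x_1=x_2$ survives), and bounds the deviation gain at $(\tfrac12-\epsilon,\tfrac12+\epsilon)$ by the unattained supremum $(1-\phi)x_2$. Your constants $\tfrac{1-\phi}{2}+\epsilon\tfrac{4\alpha\phi}{\alpha^2-1}$ and $\epsilon\bigl[(1-\phi)-\tfrac{4\alpha\phi}{\alpha^2-1}\bigr]$ are in fact the correct ones (the paper's appendix has minor typographical slips here), and the conclusion is unaffected.
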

\begin{proof}
See Appendix~\ref{Subsect:noEQinR1}.
\hfill\end{proof}

The intuition for the result is that if the organizational cost of changing platform is sufficiently small, candidates prefer to converge towards the center to increase the chances of being the favorite candidate, even if this decreases the probability of a secured election.
However, the centripetal incentive stops when candidates take the same position because of the discontinuity in the payoff when $x_1=x_2$. Indeed, if candidates have minimally differentiated platforms, each has an advantage in the second stage if the median voter is located on his side of the policy space. Formally, by converging fully to $x_1=x_2=\frac{1}{2}$ each candidate obtains an expected payoff equal to $\frac{1}{2}-\phi$. Instead, by minimally differentiating from the center and choosing $x_1=\frac{1}{2}-\epsilon$, $x_2=\frac{1}{2}+\epsilon$ they both obtain a higher expected payoff, which converges to $\frac{1}{2}-\frac{\phi}{2}$ as $\epsilon$ goes to $0$.

The condition on the parameters for the existence of a subgame-perfect equilibrium is drawn in Figure~\ref{fi:3regions} in the $(a,\phi)$ plane. The region dealt with in Proposition~\ref{prop:first_stage} is designated by $R_0$ and formally defined as $\{(a,\phi)\vert \Psi(a)<\phi<\tfrac{1}{2}\}$ where
\begin{equation}\label{Eq:Psi(a)}
\Psi(a)=\frac{1}{1+4\sqrt{a(1+a)}}.
\end{equation}
In the region named $R_1$, only an $\epsilon$-equilibrium exists.

\begin{figure}[h!]
\begin{center}
\begin{tikzpicture}

\begin{axis}[axis x line=middle, axis y line=middle, xmin=0,  xmax=3, ymin=0, ymax=0.5, xlabel={$a$}, ylabel={$\phi$},  x label style={at={(axis description cs:1.1,0)},anchor=north},    y label style={at={(axis description cs:0,1)},anchor=south}]
	\addplot[
		name path=A,
        domain = 0.058:3,
        samples = 100,
        smooth,
        thick,
    ] {(1/x)/(1/x+4*sqrt(1+(1/x)))};
	\addplot[name path=B, thick] coordinates{(0,0.5) (3,0.5)};
	\addplot[pattern=north west lines, pattern color=brown!50]fill between[of=A and B, soft clip={domain=0.059:10}];
	
	\node[] at (axis cs: 0.7,0.08) {$R_1$: No SPNE};  
	
	
	\node[] at (axis cs: 1.8,0.28) {$R_0$: SPNE $x_1\neq x_2$};
	
	\node[] at (axis cs: 2.3,0.055) {$\Psi(a)$};

\end{axis}
\end{tikzpicture}
\caption{The two regions $R_0$ and $R_1$ in the $a-\phi$ space. The colored region represents the area $R_0$ concerned with Proposition~\ref{prop:first_stage}.}
\label{fi:3regions}
\end{center}
\end{figure}
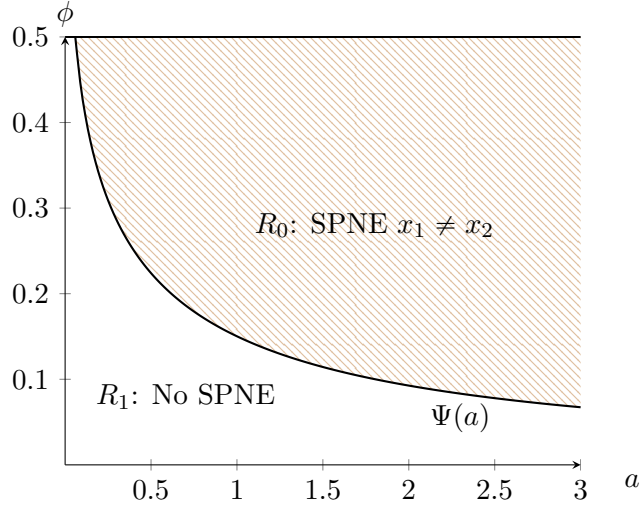

\subsection{Adjustments along the equilibrium path}
In light of our analysis, we can now put forward a few implications concerning the occurrence of candidates' policy changes. We focus on the parameter region in which the electoral and organizational costs are high enough, and thus a subgame-perfect equilibrium with divergent ex-ante positions exists.

\begin{proposition}\label{prop:implications}
Suppose that $\phi>\frac{1}{1+4\sqrt{a(1+a)}}$. At the subgame-perfect equilibrium,
\begin{enumerate}[(i)]
\item candidates flip-flop to voters' preferences only towards the center;\label{p5-1}
\item the favorite candidate is more likely to flip-flop than the challenger; \label{p5-2}
\item when the favorite candidate flip-flops, the magnitude of the adjustment is smaller than when the challenger flip-flops;\label{p5-3}
\item flip-flopping is always successful when done by a favorite candidate, while it is more likely to be unsuccessful than successful for a challenger.\label{p5-4}
\end{enumerate}
\end{proposition}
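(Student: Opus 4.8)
The plan is to work entirely along the equilibrium path identified in Proposition~\ref{prop:first_stage}, where the ex-ante platforms are $x_1^*=\frac{1}{\alpha+1}$ and $x_2^*=\frac{\alpha}{\alpha+1}$, symmetric around the center $\frac12$, with second-stage play given by Proposition~\ref{prop:second_stage}. Since flip-flops occur only when the election is still open, the first observation I would record is that at these platforms the open region is $m\in\bigl(\frac{2\alpha}{(\alpha+1)^2},\frac{\alpha^2+1}{(\alpha+1)^2}\bigr)$, split at $m=\frac12$ according to which candidate is the favorite (Figure~\ref{fig:eq}); these boundaries follow by substituting $x_1^*,x_2^*$ into the formulas for $\overline m$ and $\underline n$. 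Throughout, the single structural fact I would lean on is $\alpha=\sqrt{\frac{1+a}{a}}>1$, which holds for every $a>0$.

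For claim~\ref{p5-1}, I would use that a moving candidate relocates to $\hat{y_i}=\frac{m+ax_i}{1+a}$ (Eq.~\eqref{Eq:opt_plat}), a convex combination of $x_i$ and $m$, so the direction of the move is the sign of $m-x_i$. Taking candidate~$1$ (the left candidate) as the favorite, in the open region $m>\frac{2\alpha}{(\alpha+1)^2}>\frac{1}{\alpha+1}=x_1^*$, the last inequality being exactly $\alpha>1$; hence $\hat{y_1}>x_1^*$ and candidate~$1$ moves rightward, toward the center. For the challenger (candidate~$2$) in the same region, $m<\frac12<\frac{\alpha}{\alpha+1}=x_2^*$, again by $\alpha>1$, so $\hat{y_2}<x_2^*$ and candidate~$2$ also moves toward the center. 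The region where candidate~$2$ is the favorite is symmetric, which establishes that every adjustment is centripetal.

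Claims~\ref{p5-2} and~\ref{p5-3} are immediate consequences of the second-stage equilibrium. For~\ref{p5-2}, Proposition~\ref{prop:second_stage} gives, in every still-open subgame, moving probabilities $1-\phi$ for the favorite and $\phi$ for the challenger; since $\phi<\frac12$ we have $1-\phi>\phi$, and as flip-flops arise only in the open region this comparison transfers to the equilibrium path. For~\ref{p5-3}, the magnitude of an adjustment is $|\hat{y_i}-x_i|=\frac{|m-x_i|}{1+a}$, increasing in the ex-ante distance $|m-x_i|$ to the median; by Definition~\ref{def:favorite} the favorite is the closer candidate, so his adjustment is strictly smaller than the challenger's.

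Claim~\ref{p5-4} I would read off the payoff matrix in Table~\ref{tbl:2x2when_no_UQF}. When the favorite plays $\hat{y_1}$ he wins in both columns, so a flip-flop by the favorite is always successful. When the challenger plays $\hat{y_2}$ he wins only against the favorite's non-moving action $x_1$, which is played with probability $\phi<\frac12$; hence a challenger's flip-flop succeeds with probability $\phi$ and fails with probability $1-\phi>\frac12$, making it more likely unsuccessful. The only step requiring genuine verification is claim~\ref{p5-1}: there one must pin down the boundaries of the open region at the equilibrium platforms and confirm that $m$ always lies on the central side of each candidate's own platform, which reduces cleanly to the single inequality $\alpha>1$; every other part is a direct corollary of the previously established second-stage equilibrium.
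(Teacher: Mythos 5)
Your proposal is correct and follows essentially the same route as the paper's own proof: claim (i) via the position of $x_1^*,x_2^*$ relative to the open interval and the fact that $\hat{y_i}$ is a convex combination of $x_i$ and $m$, claim (ii) from the mixed-strategy probabilities $1-\phi>\phi$, claim (iii) from $|\hat{y_i}-x_i|=\tfrac{|m-x_i|}{1+a}$ together with Definition~\ref{def:favorite}, and claim (iv) from Table~\ref{tbl:2x2when_no_UQF}. The only difference is that you make explicit the reduction of the boundary inequalities to $\alpha>1$, which the paper leaves implicit.
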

\begin{proof}
See Appendix~\ref{Subsect:prop_implications}.
\hfill
\end{proof}

Property (\ref{p5-1}) describes a dynamic of moderation along the electoral campaign, according to which candidates start out with more extreme positions and converge to the center if and once the median voter realizes in the center. The property also implies that candidates never cross over the position of the opponent. As such, even flip-flopping candidates always keep their relative ideological stands, despite the fact that they are office seekers and have no preferred policy. 

The intuition behind property (\ref{p5-2}) relies on the mixed strategies used by candidates at equilibrium: the favorite is indifferent between flip-flopping or not only when the probability to lose the election is relatively small, that is when his opponent is less likely to flip-flop. Instead, the disadvantaged candidate is indifferent only when the victory is relatively unlikely, that is when the favorite secures his advantage with high probability.

Property (\ref{p5-4}) means that a flip-flopping favorite always wins the election in equilibrium, while a flip-flopping challenger loses whenever the favorite also flip-flops, i.e. with probability $1-\phi>\frac{1}{2}$.
Taken together, these properties suggest that campaign flip-flopping consists most often in a minor adjustment by the favorite candidate in order to consolidate his victory, and only less likely it is a major and risky move by the challenger who tries to reverse the election outcome. 

The next proposition focuses on the comparative statics properties with respect to the electoral cost parameter $a$.

\begin{proposition}\label{prop:comparative}
When the electorate is more tolerant towards candidates changing positions, that is when $a$ decreases,
\begin{enumerate}[(i)]
\item the candidates' ex-ante policies are more polarized;\label{p6-1}
\item the election is more likely to be still open after the reveal of $m$, hence each candidate is more likely to flip-flop;\label{p6-2}
\item the candidates' equilibrium payoffs are lower.\label{p6-3}
\end{enumerate}
\end{proposition}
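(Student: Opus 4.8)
The plan is to reduce all three claims to elementary monotonicity statements in the single parameter $\alpha=\sqrt{(1+a)/a}$. The key preliminary observation is that $\alpha=\sqrt{1+1/a}$ is strictly decreasing in $a$, so the hypothesis ``$a$ decreases'' is equivalent to ``$\alpha$ increases'', and it suffices to track how the equilibrium objects from Proposition~\ref{prop:first_stage} vary with $\alpha$ on the relevant range $\alpha>1$.

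For (\ref{p6-1}), I would write the equilibrium distance as $x_2^*-x_1^*=\frac{\alpha}{\alpha+1}-\frac{1}{\alpha+1}=\frac{\alpha-1}{\alpha+1}=1-\frac{2}{\alpha+1}$, which is manifestly strictly increasing in $\alpha$; equivalently $x_1^*=\frac{1}{\alpha+1}$ strictly decreases and $x_2^*=\frac{\alpha}{\alpha+1}$ strictly increases in $\alpha$, while the two platforms stay symmetric about $\tfrac12$. Hence as $a$ decreases the platforms move apart, i.e.\ polarization increases.

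For (\ref{p6-2}), I would first record that, along the equilibrium path, the election is still open precisely when $m$ lands in the middle interval of Figure~\ref{fig:eq}, whose length is $\frac{\alpha^2+1}{(\alpha+1)^2}-\frac{2\alpha}{(\alpha+1)^2}=\left(\frac{\alpha-1}{\alpha+1}\right)^2$; since $m$ is uniform, this is the probability the election is still open, and it is strictly increasing in $\alpha$ because $\frac{\alpha-1}{\alpha+1}\in(0,1)$. To pass from this to each candidate's flip-flop probability, I would invoke the symmetry of the equilibrium (the platforms are symmetric about $\tfrac12$ and $m$ is uniform), so that conditional on the election being open each candidate is the favorite with probability $\tfrac12$; using Proposition~\ref{prop:second_stage}, candidate $i$ then flip-flops with conditional probability $\tfrac12(1-\phi)+\tfrac12\phi=\tfrac12$ irrespective of his role, giving an unconditional flip-flop probability $\tfrac12\left(\frac{\alpha-1}{\alpha+1}\right)^2$ that inherits the same monotonicity.

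For (\ref{p6-3}), I would substitute into the payoff formula \eqref{Eq:Equilibrium_internal_payoff}: $g_1^*=g_2^*=\tfrac12-\tfrac{\phi}{2}\left(\frac{\alpha-1}{\alpha+1}\right)^2$ is strictly decreasing in $\alpha$ (the subtracted term grows with $\alpha$), so payoffs fall as $a$ decreases. The computations are all routine; the only step needing care is (\ref{p6-2}), where one must justify that conditional on an open election the marginal flip-flop probability is role-independent, which is exactly where the symmetry of the equilibrium configuration does the work. Everything else is bookkeeping in $\alpha$ together with the single sign fact that $\alpha$ is strictly decreasing in $a$.
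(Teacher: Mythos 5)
Your proposal is correct and follows essentially the same route as the paper: all three claims are reduced to monotonicity in $\alpha=\sqrt{(1+a)/a}$ of the equilibrium distance $\frac{\alpha-1}{\alpha+1}$, the open-election probability $\left(\frac{\alpha-1}{\alpha+1}\right)^2$, and the payoff $\frac12-\frac{\phi}{2}\left(\frac{\alpha-1}{\alpha+1}\right)^2$. Your extra care in (ii) — using symmetry to show the conditional flip-flop probability is $\tfrac12$ regardless of role — is a slightly more explicit version of the paper's remark that the conditional likelihood of flip-flopping does not depend on $a$.
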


\begin{proof}
See Appendix~\ref{Subsect:prop_comparative}.
\hfill
\end{proof}

The comparative statics analysis highlights two main phenomena. First, candidates prefer an intransigent electorate that is less tolerant of flip-flopping. The underlying mechanism is described in claim (\ref{p6-2}): more tolerant electorates lead to increased competition, with more elections to be still open after the revelation of the median. As a result, winning elections without flip-flopping becomes less likely, leading to smaller equilibrium payoffs.\\
Additionally, while the electoral cost is a necessary ingredient in Proposition~\ref{prop:implications} to obtain differentiation, the degree of polarization decreases as the electoral cost increases. Indeed, in equilibrium, candidates push their secured intervals to the limits of the policy space, as $\underline{m}=0$ and  $\overline{n}=1$, and they position in the middle of these intervals. However, when $a$ is small, these intervals shrink due to increased competition (the challenger can more easily flip-flop to reverse the election), causing the center of these intervals to shift closer to the extremities, and the degree of candidates polarization is thus larger.


\section{Asymmetric candidates}\label{Sect:ASYM_candidates}
In this section, we relax the symmetry between candidates by supposing that voters penalize differently each candidate for changing platform, via different values of the parameter $a$.\footnote{A similar analysis can be made to consider different organizational costs $\phi$, without affecting the results significantly.}
Thus, the utility of voter $t$ from voting for candidate~$i$ with ex-ante and ex-post platforms $x_i$ and $y_i$ is 
\begin{equation}
u_t^i(x_i,y_i):=-(t-y_i)^2-a_i(y_i-x_i)^2
\end{equation}
as opposed to Eq.~\eqref{Eq:utilityvoter}.

An interesting phenomenon results from this heterogeneity between candidates: for a range of possible realizations of the median voter's location, the favorite candidate is not guaranteed to win the election even if he adjusts his position. Indeed, the optimal adjustments are still given by weighted averages between the ex-ante platforms and the realization of the median voter, namely $\hat{y_i}=\frac{m+a_ix_i}{1+a_i}$, but because the weights are different for each candidate, the challenger might attract the median voter after both candidates adjust. Suppose, for example, that $a_1<a_2$ and that platforms $x_1<x_2$ were chosen ex-ante. If $m$ realizes sufficiently close to $\frac{x_1+x_2}{2}$ then candidate~$1$ can adjust his position closer to $m$ than candidate~$2$, thanks to his smaller electoral cost, and win the election. This is true even if $m$ is larger than $\frac{x_1+x_2}{2}$, i.e. if candidate~$2$ is the favorite according to Definition~\ref{def:favorite}.
This justifies the following definition:

\begin{definition}
Candidate $i$ is a weak favorite and candidate $j$ is a strong challenger if $|x_i-m|<|x_j-m|$ but $u^i_{m}(x_i,\hat{y}_i)<u^j_{m}(x_j,\hat{y}_j)$.
\end{definition}

In other terms, a weak favorite candidate wins the election when no candidate adjusts, but looses the election when both candidates adjust their platforms. More precisely, the (second-stage) game played by a weak favorite and a strong challenger is as follows:

\begin{table}[h!]
\begin{center}
\begin{game}{2}{2}[Strong challenger candidate~$1$][Weak favorite candidate~$2$]
& $\hat{y_2}$ & $x_2$ \\
$\hat{y_1}$ &$(1-\phi,-\phi)$ &$(1-\phi,0)$\\
$x_1$ &$(0,1-\phi)$ &$(0,1)$
\end{game}
\end{center}
\end{table}

The game admits $(\hat{y_1},x_2)$ as its unique equilibrium, since both actions are strictly dominant. That is, in equilibrium, a strong challenger adjusts his position and wins the election with certainty, while a weak favorite does not move. The equilibrium payoffs are given by $(1-\phi,0)$ and are identical to the case in which candidate~$1$ is the favorite and the election is still open in the symmetric game.

For $a_1<a_2$, by solving $u^2_{m}(x_2,\hat{y}_2)<u^1_{m}(x_1,\hat{y}_1)$ with respect to $m$, we find that candidate~$2$ is a weak favorite when $m \in (\frac{x_1+x_2}{2},\tilde{m})$, in which $\tilde{m}=\frac{\a_1 x_2 + \a_2 x_1}{\a_1+\a_2}$ and $\a_i=\sqrt{\frac{1+a_i}{a_i}}$. This region is represented in light blue in Figure~\ref{fig:4}. 

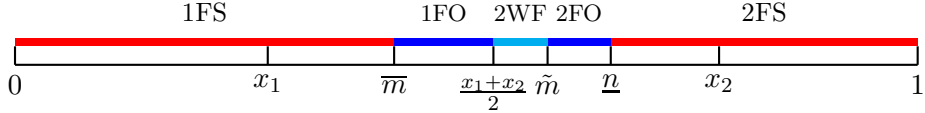
\begin{figure}[h!]
\centering
\begin{tikzpicture}[x=1.2cm]
\draw[black,thick,>=latex]
  (0,0) -- (10,0);

\node[below,align=left,anchor=north,inner xsep=0pt]
  at (0,0) {$0$};	
\node[below,align=left,anchor=north,inner xsep=0pt]
  at (2.8,0) {$x_1$};
\node[below,align=left,anchor=north,inner xsep=0pt]
  at (4.2,0) {$\overline{m}$};
\node[below,align=left,anchor=north,inner xsep=0pt]
  at (5.3,0) {$\frac{x_1+x_2}{2}$};
\node[below,align=left,anchor=north,inner xsep=0pt]
  at (5.9,0) {$\tilde{m}$};

  \node[below,align=left,anchor=north,inner xsep=0pt]
  at (6.6,0) {$\underline{n}$};
\node[below,align=left,anchor=north,inner xsep=0pt]
  at (7.8,0) {$x_2$};
\node[below,align=left,anchor=north,inner xsep=0pt]
  at (10,0) {$1$};

\foreach \Xc in {0,2.8,4.2,
5.3, 5.9,6.6,7.8,10}
{
  \draw[black,thick]
    (\Xc,0) -- ++(0,7pt) node[above] {};
}

 \fill[red]
    (0,0.25)
      rectangle node[above, color=black] {\strut\small 1FS}
    (4.2,0.35);

 \fill[red]
    (6.6,0.25)
      rectangle node[above, color=black] {\strut\small 2FS}
    (10,0.35);

 \fill[blue]
    (4.2,0.25)
      rectangle node[above, color=black] {\strut\footnotesize 1FO}
    (5.3,0.35);

 \fill[blue]
    (5.9,0.25)
      rectangle node[above, color=black] {\strut\footnotesize 2FO}
    (6.6,0.35);

 \fill[cyan]
    (5.3,0.25)
      rectangle node[above, color=black] {\strut\footnotesize 2WF}
    (5.9,0.35);

\end{tikzpicture}
\caption{The status of each player being the favorite when the election is secured (FS) or still open (FO) at the subgame-perfect equilibrium described by Proposition~\ref{prop:asym_a}, with respect to the position of the median voter $m$, for $a_1<a_2$.
In the light blue region, candidate~$2$ is the weak favorite (WF)}
\label{fig:4}
\end{figure}

The analysis of the second stage in all other regions (when the favorite is not weak) is the same in the asymmetric game as in the symmetric game of the previous sections.
The next proposition describes the subgame-perfect equilibrium considering both stages of the game.

\begin{proposition}[Equilibrium with asymmetric electoral costs]\label{prop:asym_a}
Assume that $a_1<a_2$ and that $\phi > \max(\Psi(a_1),\Psi(a_2))$. 
There exists a unique subgame-perfect equilibrium, in which the ex-ante locations are given by 
$$(x_1^*,x_2^*)=\left(\frac{\a_1-1}{\a_1\a_2-1},\frac{\a_2(\a_1-1)}{\a_1\a_2-1}\right)$$
where $\a_i=\sqrt{\frac{1+a_i}{a_i}}$.

The ex-post behavior depends on $m$:
\begin{itemize}
\item For $m\in \left[\tfrac{x^*_1+x^*_2}{2},\tilde{m}\right]$ where $\tilde{m}=\tfrac{\a_1x_2^*+\a_2x_1^*}{\a_1+\a_2}=\tfrac{\a_2(\a_1^2-1)}{(\a_1\a_2-1)(\a_1+\a_2)}$, candidate~$1$ adjusts to $\hat{y}_1$ and wins the election, whereas candidate~$2$ remains at $x_2^*$ and loses the election.
\item Otherwise, it follows Proposition~\ref{prop:second_stage}.
\end{itemize}

If $\phi < \max(\Psi(a_1),\Psi(a_2))$, there is no subgame-perfect equilibrium.
\end{proposition}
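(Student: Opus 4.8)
The plan is to solve the game by backward induction, taking the second-stage continuation as already pinned down in every region of Figure~\ref{fig:4}: the favorite obtains $1$ where the election is secured and $1-\phi$ where it is still open, the challenger obtains $0$, and in the weak-favorite interval the strong challenger (the more flexible candidate~$1$) obtains $1-\phi$ while the weak favorite obtains $0$, per the $2\times 2$ game stated just before the proposition. The first step is therefore to integrate these continuation payoffs against the uniform law of $m$ on $[0,1]$ and write each candidate's first-stage payoff as a function of $(x_1,x_2)$. Assuming $x_1<x_2$, candidate~1 wins on $m\in(0,\tilde m)$ and candidate~2 on $m\in(\tilde m,1)$; weighting each candidate's secured subinterval by $1$ and the remainder of his winning range by $1-\phi$ gives $g_1=\phi(\overline m-\underline m)+(1-\phi)\,\tilde m$ and $g_2=\phi(\overline n-\underline n)+(1-\phi)(1-\tilde m)$, with the natural modification when $\underline m$ or $\overline n$ is clamped to the boundary of $[0,1]$.

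The heart of the argument is then a marginal analysis of these piecewise-linear payoffs. Each $g_i$ is linear in $x_i$ on either side of the single kink at which candidate~$i$'s secured interval reaches the nearer edge of the policy space ($\underline m=0$ for candidate~1, obtained by increasing $x_1$; $\overline n=1$ for candidate~2, by increasing $x_2$). I would show that $g_i$ rises up to this kink and that beyond it $g_i$ falls precisely when $\phi$ is large enough; the two kink conditions $\underline m=0$ and $\overline n=1$, i.e. $x_2=\a_2 x_1$ and $\a_1 x_2-x_1=\a_1-1$, form a linear system whose unique solution is the claimed $(x_1^*,x_2^*)=\bigl(\tfrac{\a_1-1}{\a_1\a_2-1},\tfrac{\a_2(\a_1-1)}{\a_1\a_2-1}\bigr)$. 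One checks directly that this collapses to $\bigl(\tfrac{1}{\a+1},\tfrac{\a}{\a+1}\bigr)$ when $\a_1=\a_2$, consistent with Proposition~\ref{prop:first_stage}, and that $m\in[\tfrac{x_1^*+x_2^*}{2},\tilde m]$ is exactly the weak-favorite interval, which recovers the stated ex-post behavior.

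The main obstacle is converting this \emph{local} kink optimum into a \emph{global} best response and thereby extracting the exact parameter condition. Unlike the symmetric case, here the weak-favorite interval contributes $(1-\phi)\tilde m$ rather than $(1-\phi)\tfrac{x_1+x_2}{2}$ to the marginal payoff, so the sign of the post-kink derivative — and hence the threshold on $\phi$ — must be computed with $\partial\tilde m/\partial x_i=\a_j/(\a_1+\a_2)$ in place of the symmetric slope $\tfrac12$. Verifying that the resulting inequality coincides with $\phi>\max(\Psi(a_1),\Psi(a_2))$, with the larger threshold $\Psi(a_1)$ binding for the more flexible candidate, is the delicate computational core and the step most prone to error; I would cross-check it against the symmetric identity $\tfrac{\a^2-1}{\a^2+4\a-1}=\Psi(a)$ and scrutinize whether the weak-favorite term shifts the threshold away from the clean $\Psi$ form in the genuinely asymmetric case.

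Finally, I would dispose of the remaining deviations and settle uniqueness and non-existence. On each candidate's own side the payoff is unimodal, so only two global deviations survive: crossing over the opponent and converging to a common position. Crossing makes the flexible candidate the right-hand player facing the analogous trade-off and is excluded by a payoff comparison, while convergence is ruled out exactly as in Proposition~\ref{prop:noEQinR1}, using $\phi<\tfrac12$ and the payoff discontinuity at $x_1=x_2$ (the asymmetric analogue of Proposition~\ref{prop:r1=r2}). Since each best response is single-valued, its fixed point is the unique subgame-perfect equilibrium. When the threshold fails, the binding candidate's payoff is strictly increasing toward the center throughout his feasible range, so no interior profile is stable and, by the same discontinuity argument, no subgame-perfect equilibrium exists — mirroring the structure of Proposition~\ref{prop:noEQinR1}.
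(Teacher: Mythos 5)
Your route is the paper's route: fix the second-stage continuation values region by region (including the weak-favorite interval, where the strong challenger gets $1-\phi$ and the weak favorite gets $0$), integrate against the uniform law of $m$ to obtain $g_1=\phi(\overline m-\underline m)+(1-\phi)\tilde m$ and $g_2=\phi(\overline n-\underline n)+(1-\phi)(1-\tilde m)$, run a piecewise-linear marginal analysis with kinks at $\underline m=0$ and $\overline n=1$, and solve the resulting linear system for $(x_1^*,x_2^*)$. All of that matches Appendix~\ref{Subsect:asym_a}, and your explicit payoff formulas are if anything more transparent than the paper's.

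The genuine gap is the step you flag and then leave undone: the sign of the marginal payoff on the non-clamped side of each kink, which is exactly what produces the parameter condition in the statement. This cannot be deferred, because carrying it out does not return the clean $\Psi$ thresholds. For candidate~2, on $x_2<1-\tfrac{1-x_1}{\a_1}$ one gets $\partial g_2/\partial x_2=\tfrac{2\a_1\phi}{\a_1^2-1}-\tfrac{(1-\phi)\a_1}{\a_1+\a_2}$, which is positive (so that the kink $\overline n=1$ is his best response rather than a more central platform) if and only if $\phi>\tfrac{\a_1^2-1}{\a_1^2+2\a_1+2\a_2-1}$; since $a_1<a_2$ gives $\a_2<\a_1$, this threshold is \emph{strictly larger} than $\Psi(a_1)=\tfrac{\a_1^2-1}{\a_1^2+4\a_1-1}=\max(\Psi(a_1),\Psi(a_2))$. (Symmetrically, candidate~1's condition is $\phi>\tfrac{\a_2^2-1}{\a_2^2+2\a_1+2\a_2-1}$, which is strictly smaller than $\Psi(a_2)$, so it is candidate~2, not the more flexible candidate~1, whose constraint binds.) Thus the suspicion you raise --- that the weak-favorite term, entering the margin through $\partial\tilde m/\partial x_i=\a_j/(\a_1+\a_2)$ rather than $\tfrac12$, shifts the threshold away from the $\Psi$ form --- is borne out by the computation, and a complete proof must either establish these corrected thresholds or explain why the stated condition nevertheless suffices; the paper's own appendix asserts the $\Psi(a_i)$ thresholds by ``repeating the argument'' without displaying this derivative, so your cross-check is precisely where the work lies. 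The remaining pieces of your plan (ruling out crossing over the opponent, ruling out convergence via the payoff discontinuity at $x_1=x_2$, and the non-existence claim below the threshold) track the paper and are adequate at the level of detail the paper itself provides.
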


\begin{proof} See Appendix~\ref{Subsect:asym_a}\end{proof}

While the ex-ante positions were symmetric around $\tfrac{1}{2}$ in the case of identical electoral costs, we now have that 
\begin{equation}\label{diff}
\frac{x_1^*+x_2^*}{2}=\frac{1}{2}+\frac{\a_1-\a_2}{2(\a_1\a_2-1)}
\end{equation}
hence, the candidate with a lower $a$ (a higher $\a$) takes a more centrist ex-ante position and has a higher expected payoff. Expected payoffs in equilibrium are modified with respect to the expressions in Eq.~\eqref{Eq:Equilibrium_internal_payoff} by the possibility for the favorite to be weak, as follows:
\begin{equation}\label{payoffs}
\begin{aligned}
&g_1= 1 \times \left[\tfrac{2\a_2(\a_1-1)}{(\a_1 \a_2 -1)(\a_2+1)}\right] + (1-\phi) \times \left[\tfrac{(\a_1-1)(\a_1\a_2+\a2)}{(\a_1\a_2-1)(\a_1+\a_2)}-\tfrac{2\a_2(\a_1-1)}{(\a_1 \a_2 -1)(\a_2+1)}\right]\\
&g_2=1\times \left[1-\tfrac{(\a_1\a_2+1)(\a_1-1)}{(\a_1\a_2-1)(\a_1+1)} \right]+(1-\phi)\times \left[\tfrac{(\a_1\a_2+1)(\a_1-1)}{(\a_1\a_2-1)(\a_1+1)}-\tfrac{\a_2(\a_1^2-1)}{(\a_1\a_2-1)(\a_1+\a_2)} \right]
\end{aligned}
\end{equation}

With respect to Proposition~\ref{prop:implications}, if candidates are asymmetric, it is still true that flip-flopping happens only towards the center. It is no longer true, in general, that a favorite candidate is more likely to flip-flop than a challenger or that the magnitude of his adjustment is smaller. The validity of these statements depends now on whether the favorite is weak or not and on the exact difference in the electoral costs $a_i$. Flip-flopping is now always successful both for a favorite candidate (recall that a weak favorite never flip-flops in equilibrium) and for a strong challenger, while it remains more likely unsuccessful than successful for a challenger who is not strong.

The introduction of heterogeneous electoral costs also allows us to refine the comparative statics properties in Proposition~\ref{prop:comparative}. We can deduce from Eq.~\eqref{diff} and Eq.~\eqref{payoffs} that, when voters are more tolerant towards a change of position by candidate $i$ (i.e. if $a_i$ decreases for a given $a_j$), the ex-ante position of $i$ is more centrist and his expected payoff is higher, while the ex-ante position of $j$ is more extreme and his expected payoff is lower. Hence, while in the symmetric case, candidates are better off when the electorate penalizes flip-flopping more, in the asymmetric case candidates are better off only if a higher electoral cost concerns the opponent. In principle, a higher electoral cost could be beneficial for a candidate by reducing the likelihood of adjusting the position and paying the organizational cost. We find instead that, in equilibrium, each candidate would prefer voters to be more tolerant with himself. Indeed, the region in which a candidate $i$ has a secured election is independent of $a_i$ and only depends on the electoral cost $a_j$ for the opponent. Hence, any increase in $a_i$ simply decreases the likelihood of $i$ being in a better position (non-weak favorite or strong challenger) in a still-open election, and it is therefore detrimental.

\appendix
\numberwithin{equation}{section}
\section{Proofs}\label{app:proofs}

\subsection{Useful Lemmas}\label{app:lemmas}
In this section we present results which we use in the paper. Most of these results are simple computation and are omitted from the main text for the ease of reading.

\begin{lemma}\label{cl:mmnn}
Suppose $x_1 \neq x_2$ and without loss of generality $x_1 < x_2$. Candidate~$1$ has secured the election if and only if $ m  \in (\underline{m},\overline{m})$ with 
$$\underline{m}:= \tfrac{\a x_1 - x_2}{\a-1} \vee 0$$
$$\overline{m}:= \tfrac{\a x_1 + x_2}{\a+1} $$
candidate~$2$ has secured the election if and only if $m  \in (\underline{n},\overline{n})$ with 
$$\underline{n}:= \tfrac{\a x_2 + x_1}{\a+1} $$
$$\overline{n}:= \tfrac{\a x_2 - x_1}{\a-1} \wedge 1$$
\end{lemma}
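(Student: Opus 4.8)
The plan is to turn the definition of a secured election into an explicit inequality in $m$ and solve it. Assume $x_1<x_2$ and that candidate~$1$ is the favorite, so $m<\tfrac{x_1+x_2}{2}$ and in particular $m<x_2$. The left-hand side of the condition in Definition~\ref{def:unquest_favorite} is immediate, since keeping the platform costs nothing to the utility: $u_m(x_1,x_1)=-(m-x_1)^2$. The first real computation is the right-hand side $\max_{y\in[0,1]}u_m(x_2,y)$.

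For that maximization I would note that $u_m(x_2,y)=-(m-y)^2-a(y-x_2)^2$ is strictly concave in $y$, so its unconstrained maximizer is the unique stationary point, which is exactly $\hat{y}_2=\tfrac{m+ax_2}{1+a}$ from Eq.~\eqref{Eq:opt_plat}. The key observation that makes the box constraint harmless is that $\hat{y}_2$ is a convex combination of $m\in[0,1]$ and $x_2\in[0,1]$, hence $\hat{y}_2\in[0,1]$; the constraint is therefore slack and the maximum equals $u_m(x_2,\hat{y}_2)$. Using $m-\hat{y}_2=\tfrac{a(m-x_2)}{1+a}$ and $\hat{y}_2-x_2=\tfrac{m-x_2}{1+a}$, this value simplifies to $-\tfrac{a}{1+a}(m-x_2)^2$.

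Substituting both sides into the securing inequality gives $-(m-x_1)^2>-\tfrac{a}{1+a}(m-x_2)^2$, equivalently $\tfrac{1+a}{a}(m-x_1)^2<(m-x_2)^2$. Recalling $\alpha^2=\tfrac{1+a}{a}$ and taking square roots, the condition becomes $\alpha\,|m-x_1|<|m-x_2|=x_2-m$. It remains to remove the last absolute value by splitting on the sign of $m-x_1$, using $\alpha>1$. When $m\ge x_1$ the inequality rearranges to $(\alpha+1)m<\alpha x_1+x_2$, i.e. $m<\overline{m}$; when $m<x_1$ it rearranges to $(\alpha-1)m>\alpha x_1-x_2$, i.e. $m>\tfrac{\alpha x_1-x_2}{\alpha-1}$. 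Intersecting with the feasibility constraint $m\ge 0$ yields the lower bound $\underline{m}=\tfrac{\alpha x_1-x_2}{\alpha-1}\vee 0$, so candidate~$1$ secures exactly on $(\underline{m},\overline{m})$. The claim for candidate~$2$ follows by the mirror-image computation, exchanging the roles of $x_1$ and $x_2$ and intersecting with $m\le 1$ to produce the cap $\overline{n}=\tfrac{\alpha x_2-x_1}{\alpha-1}\wedge 1$.

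I expect the only delicate points to be bookkeeping rather than conceptual: checking that the unconstrained optimum $\hat{y}_2$ always lands in $[0,1]$ so that the clean closed form for the maximum is valid, and tracking signs correctly through the case split, which relies on $\alpha>1$ (guaranteed since $\alpha^2=1+\tfrac1a>1$). As a consistency check I would verify that $\underline{m}<x_1<\overline{m}<\tfrac{x_1+x_2}{2}$, confirming that the secured interval sits inside the region where candidate~$1$ is the favorite.
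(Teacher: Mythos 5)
Your proposal is correct and follows essentially the same route as the paper: reduce the securing condition to $(m-x_2)^2>\tfrac{1+a}{a}(m-x_1)^2$ by evaluating $\max_y u_m(x_2,y)$ at $\hat{y}_2$, then resolve the absolute values by a case split and truncate to $[0,1]$. You merely fill in the computations the paper compresses into ``after simplification'' and ``a case by case analysis,'' including the useful checks that $\hat{y}_2\in[0,1]$ and that $\overline{m}<\tfrac{x_1+x_2}{2}$.
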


\begin{proof} 
candidate~$1$ has secured the election when $u_m^1(x_1,x_1)>\max\limits_y u_m^2(x_2,y)$. 
After simplification, we find that it is the case when:
$$(m-x_2)^2 > \frac{1+a}{a} (m-x_1)^2$$
A case by case analysis (whether $m<x_1$, $m \in [x_1,x_2]$ or $m>x_2$ gives that candidate~$1$ has secured the election if $m \in (\tfrac{\a x_1 - x_2}{\a-1},\tfrac{\a x_1 + x_2}{\a+1})$.
Because $m$ has a support in $[0,1]$ that is $m \in (\underline{m},\overline{m})$.
The computations are symmetric for candidate~$2$.
We easily verify that $\underline{m} < x_1  < \overline{m} < \tfrac{x_1+x_2}{2} < \underline{n}< x_2 < \overline{n}$.
\hfill \end{proof}

\begin{lemma}\label{cl:xstar}  
Consider the ex-post platform that maximizes the utility of the median voter from candidate $i$:
\[\hat{y_i}=\argmax_{y_i} u_m (x_i,y_i)=\tfrac{m+ax_i}{1+a}\]


We have:
\begin{enumerate}
\item If candidate $i$ is the favorite, any platform $y_i \notin \{x_i,\hat{y_i}\}$ is either dominated by or redundant with a strategy in $\{x_i,\hat{y_i}\}$. 
\item If candidate $i$ is the challenger, any platform $y_i \notin \{x_i,\hat{y_i}\}$ is weakly-dominated by the platform $\hat{y_i}$ .
\item Weakly dominated actions are not played in equilibrium.
\end{enumerate}
Hence, at equilibrium, candidates select $\hat{y_i}$ when they adjust their platforms.
\end{lemma}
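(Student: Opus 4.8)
The plan is to base everything on a single monotonicity observation combined with the fact that the organizational cost is a flat $\phi$ for every move, regardless of its size. First I would record the optimization underlying the formula: since $u_m(x_i,\cdot)$ is the strictly concave parabola $y\mapsto-(m-y)^2-a(y-x_i)^2$, the first-order condition $2(m-y)+2a(x_i-y)=0$ yields the unique maximizer $\hat{y_i}=\frac{m+ax_i}{1+a}$, and $u_m(x_i,y)$ is strictly decreasing in $|y-\hat{y_i}|$. This settles the displayed expression for $\hat{y_i}$.

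Next I would prove the key monotonicity fact: for any (possibly mixed) opponent platform $y_j$, candidate $i$'s probability of winning, $\mathbb{P}\big(u_m(x_i,y_i)>u_m(x_j,y_j)\big)+\tfrac12\mathbb{P}\big(u_m(x_i,y_i)=u_m(x_j,y_j)\big)$, depends on $y_i$ only through the scalar $u_m(x_i,y_i)$ and is \emph{nondecreasing} in it, because for each fixed $v$ the map $U\mapsto\1_{U>v}+\tfrac12\1_{U=v}$ is nondecreasing in $U$, and one integrates over the law of $v=u_m(x_j,y_j)$. Since every platform $y_i\neq x_i$ incurs the identical cost $\phi$ while $\hat{y_i}$ maximizes $u_m(x_i,\cdot)$, it follows that $\hat{y_i}$ yields a weakly higher winning probability than any other move at the same cost; hence $\hat{y_i}$ weakly dominates every $y_i\notin\{x_i,\hat{y_i}\}$ against all opponent behavior. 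For the challenger this is exactly statement~2.

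For the favorite (statement~1) I would first invoke statement~2 to restrict the challenger to $\{x_j,\hat{y_j}\}$, and then case-split on the level $u:=u_m(x_i,y_i)$ of an off-menu move relative to the two attainable levels $u_m(x_j,x_j)\le u_m(x_j,\hat{y_j})$. If $u>u_m(x_j,\hat{y_j})$ the move beats every opponent platform, so it wins with probability $1$ exactly like $\hat{y_i}$ and at the same cost $\phi$: it is \emph{redundant} with $\hat{y_i}$. Otherwise the move loses wherever $x_i$ already loses (namely against $\hat{y_j}$, the election being open) and wins at most where $x_i$ already wins, so it merely replicates the win/loss pattern of not moving while wasting $\phi$: it is strictly \emph{dominated} by $x_i$. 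This is precisely the claimed dichotomy ``dominated by or redundant with a strategy in $\{x_i,\hat{y_i}\}$.''

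Finally, for statement~3 and the conclusion: a strictly dominated platform is never a best reply, and the weakly dominated moves above are, against the opponent's second-stage equilibrium mixture, strictly worse than $\hat{y_i}$ or $x_i$ (a direct check shows that burning $\phi$ without raising the winning probability strictly lowers the payoff), so none is ever a best reply; the only surviving off-menu moves are payoff-equivalent to $\hat{y_i}$ and may be replaced by it. Hence whenever a candidate adjusts in equilibrium he chooses $\hat{y_i}$. I expect the main obstacle to be the bookkeeping of the favorite's case analysis: one must be careful that the ``dominated by $x_i$'' assertion is verified within the reduced game (after eliminating the challenger's off-menu platforms) rather than against arbitrary opponent actions, and must cleanly separate weak domination (equal payoff in some contingencies) from strict domination so that the equilibrium argument of statement~3 is airtight.
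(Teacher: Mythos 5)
Your overall strategy is the paper's: $\hat{y_i}$ is the unique maximizer of the median's utility, the organizational cost is flat, so the only move worth making is $\hat{y_i}$; the favorite's off-menu platforms split into ``win-guaranteeing'' (redundant with $\hat{y_i}$) and ``not win-guaranteeing'' (dominated). The formula for $\hat{y_i}$ and statement~2 are fine. The genuine soft spot is your treatment of the favorite's non-win-guaranteeing moves. You dominate them by $x_i$, which forces you (as you yourself flag) to work only in the reduced game, and even there the claim ``strictly dominated by $x_i$'' fails at the boundary level $u_m(x_i,y_i)=u_m(x_j,\hat{y_j})$: such a $y_i$ ties against $\hat{y_j}$ and earns $\tfrac12-\phi>0$, whereas $x_i$ earns $0$ there when the election is open, so this $y_i$ is not dominated by $x_i$ at all. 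The paper sidesteps both problems by dominating these moves by $\hat{y_1}$ instead: any $y$ outside the win-guaranteeing set $W_1$ wins with some probability $p\le 1$ at cost $\phi$ against an \emph{arbitrary} opponent strategy, while $\hat{y_1}$ wins with probability $1$ at the same cost, so weak domination holds with no prior reduction and no tie case. Your case analysis should be rerouted the same way (or the tie level added as a third case, handled via $\hat{y_i}$).

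Second, your argument for statement~3 overclaims. It is not true that every weakly dominated move is strictly worse than $\hat{y_i}$ or $x_i$ against the opponent's equilibrium mixture: a challenger platform $y$ with $u_m(x_i,x_i)<u_m(x_j,y)<u_m(x_j,\hat{y_j})$ beats $x_i$ but loses to $\hat{y_i}$, hence earns $\phi\cdot 1-\phi=0$ against the favorite's equilibrium mix --- exactly the equilibrium payoff --- so ``burning $\phi$ without raising the winning probability'' does not describe it. This is precisely why the paper concedes, after Proposition~\ref{prop:second_stage}, that redundant strategies can support equilibria with the same payoffs, and why its formal point~3 proves only the narrower claim for the favorite: if $y\notin W_1$ is played with positive probability, then either $p<1$ and deviating to $\hat{y_1}$ is profitable, or $p=1$, in which case the challenger must play $x_2$ and the favorite profitably deviates to $x_1$. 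You should either reproduce that argument or weaken your conclusion to payoff-equivalence rather than strategy uniqueness.
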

\begin{proof}
Suppose w.l.o.g. that candidate~$1$ is the favorite.
For any pair of ex-ante platforms $(x_1,x_2)$, we define $W_1 := \{y_1 | u_m^1(x_1,y_1) > \displaystyle\max\limits_{y_2} u_m^2(x_2,y_2) \}$, the set of policies of candidate~$1$ that guarantee him to win the election.

1.  By definition, $\hat{y_1} \in W_1$. In addition, 
If $x_1 \in W_1$, any action $y_1 \neq x_1$ is strictly dominated by $x_1$ because $x_1$ yields a payoff of $1$ and any other actions yields at most $1-\phi$, hence it is strictly dominated by $x_1$.

If $x_1 \notin W_1$, any action $y_1 \in W_1, y_1 \neq \hat{y_1}$ is redundant with $\hat{y_1}$ and any action $y_1 \notin W_1, y_1 \neq x_1$ is weakly dominated by $\hat{y_1}$. 
Indeed, consider $y\in W_1\setminus \{\hat{y_1}\}$. Both $y$ and $\hat{y_1}$ ensure a victory with payoff $1-\phi$ regardless of the platform of candidate~$2$. Hence, $y$ is redundant with $\hat{y_1}$. 

Consider instead $y\notin W_1, y\neq x_1$.
Consider strategy $\sigma$ that chooses $y$ with positive probability and strategy $\sigma'$ which is identical to $\sigma$, except that it plays $\hat{y_1}$ instead of $y$. Then the payoff of candidate~$1$ with $\sigma'$ is weakly higher than with $\sigma$ regardless of the strategy of candidate~$2$. Indeed, the expected payoff when using $y$ includes a term of the form $p-\phi$, where $p$ is the probability to win when using $y$ against the strategy of candidate~$2$, while the corresponding term when using $\hat{y_1}$ is $1-\phi$, as $\hat{y_1}$ wins with probability 1. The other terms in the expected payoff remain the same.

2. Any strategy that chooses $y\notin \{\hat{y_2},x_2\}$ with some positive probability is weakly dominated by a strategy that transfers this probability to $\hat{y_2}$. In both cases, $\phi$ is paid, but $u^2_m(x_2,\hat{y_2})>u^2_m(x_2,y)$ so $\hat{y_2}$ wins in all the events in which $y$ wins (and possibly in other events), resulting in a weakly higher payoff.

3. Weakly dominated action cannot be played in equilibrium. 
Suppose $y\notin W_1, y\neq x_1$ and suppose that in equilibrium, candidate~$1$ plays $y$ with positive probability.
Then his expected payoff when choosing $y$ (which, by indifference, is his equilibrium payoff) is $p-\phi$ where $p$ is the probability of winning the election, $p\leq 1$.
If $p<1$, then $\hat{y_1}$ is a profitable deviation of candidate~$1$ as he obtains $1-\phi$. If $p=1$, candidate~$2$ always looses the election and his expected payoff is non-positive. Then in equilibrium candidate~$2$ must choose $y_2=x_2$ which grants a payoff of $0$, and candidate~$1$ again has a profitable deviation to $y_1=x_1$ which yields a payoff of $1$. This contradicts the assumption that $y$ is played with positive probability in equilibrium.
\hfill\end{proof}

\subsection{Proof of Proposition~\ref{prop:second_stage}}\label{Subse:second_stage}
If the favorite candidate has secured the election, he has a strictly dominant strategy $y_i=x_i$ whose best reply for the challenger is $y_j=x_j$. Hence this is the unique subgame equilibrium. 

If the election is still open, by Lemma~\ref{cl:xstar}, we can limit the analysis to the $2 \times 2$ game in Table~\ref{tbl:2x2when_no_UQF}. The game has no pure strategy equilibria. The unique mixed strategy equilibrium is such that both candidates are indifferent between the two pure strategies, given the mixed strategy of the opponent. Hence the probability $p$ of playing $\hat{y_i}$ for the favorite candidate solves
\[
-p\phi+(1-p)(1-\phi)=0 \quad \Rightarrow \quad p=1-\phi,
\]
and the probability $q$ of playing $\hat{y_i}$ for the challenger candidate solves
\[
(1-\phi)=(1-q) \quad \Rightarrow \quad q=\phi.
\]

\subsection{Proof of Proposition~\ref{prop:first_stage}}\label{Subsect:first_stage}

We perform a backward analysis by supposing that after choosing actions $x_1,x_2$, players play the second-stage equilibrium provided in Proposition~\ref{prop:second_stage} and Proposition~\ref{prop:r1=r2}.

We first show that in equilibrium, the best response to a certain ex-ante platform is never a more extreme platform to the same side.
Formally and without loss of generality,
if $x_2<\frac12$ then $BR_1(x_2) \geq x_2$, and if
 $x_2>\frac12$ then $BR_1(x_2) \leq x_2$,
where $BR_1$ is the best response ex-ante platform of Candidate~$1$ to the ex-ante platform $x_2$ of Candidate~$2$.

To prove that, we show that if $x_1<x_2<\frac12$ then $g_1(x_1,x_2)<g_1(1-x_1,x_2)$.
By the uniform distribution of $m$, $g_1(x_1,x_2)=g_1(1-x_1,1-x_2)$. 
Hence, proving $g_1(x_1,x_2)<g_1(1-x_1,x_2)$ is equivalent to proving that $g_1(x_1,x_2)<g_1(x_1,1-x_2)$.
When $x_1<x_2<\tfrac{1}{2}$ and Candidate~$2$ moves from $x_2$ to $1-x_2$:
\begin{itemize}
\item candidate~$1$ is the favorite more often as $\tfrac{x_1+x_2}{2}$ increases with $x_2$.
\item candidate~$1$ secures the election more often as both $\overline{m}$ increases and $\underline{m}$ decreases with $x_2$.
\end{itemize}
We conclude that the expected payoff $g_1$ of candidate~$1$ is larger, given $g_1= 1 \times \mathbb{P}(1 \text{ FS })+ (1-\phi) \times \mathbb{P}(1 \text{ FO })$.
We can therefore suppose that $x_2 \geq \tfrac{1}{2}$ and conclude that the best response of Candidate~$1$ belongs to $[0,x_2]$.\footnote{In the border case where $x_2=\frac12$, both sides of $x_2$ are symmetric and for every best response $x_1$ in $[x_2,1]$, the ex-ante platform $1-x_1 \in [0,x_2]$ is also a best response.} The case where $x_2 \leq \tfrac{1}{2}$ and $x_2 \leq x_1$ is symmetric.

For $x_1<x_2$, in each region of the graph in Figure~\ref{fig:rep_leadership}, the payoff is determined by Proposition~\ref{prop:second_stage}:\footnote{We can safely ignore null probability events regarding the boundaries, such as $m=\overline{m}$.}
\begin{itemize}
\item If $m \in (\underline{m},\overline{m})$, candidate~$1$ has secured the election and his second-stage equilibrium payoff is $1$.
\item If $m \in [0,\underline{m}) \cup (\overline{m},\frac{x_1+x_2}{2})$, candidate ~$1$ has secured the election and his second-stage equilibrium payoff is $1-\phi$.
\item If $m \in (\frac{x_1+x_2}{2},1]$, candidate~$1$ is the challenger and his second-stage equilibrium payoff is $0$.
\end{itemize}

Because $m$ is uniformly drawn on the unit interval, the ex-ante payoff of candidate~$1$ is
\begin{equation*}
g_1(x_1,x_2)= (\overline{m}-\underline{m})+(1-\phi)\left(\tfrac{x_1+x_2}{2}-\overline{m}+\underline{m}-0\right) 
\end{equation*}
Substituting the values of $\overline{m}$ and $\underline{m}$ from Lemma~\ref{cl:mmnn}, we have that for $x_1<x_2$
\[
g_1(r,x)=\begin{cases}
x_1(\tfrac{1-\phi}{2}+\tfrac{\phi\a}{\a+1})+x_2(\tfrac{1-\phi}{2}+\tfrac{\phi}{\a+1}) \mbox{ if } 0 \leq x_1 \leq \tfrac{x_2}{\a} \\
x_1(\tfrac{1-\phi}{2}-\tfrac{2\phi \a}{\a^2-1})+x_2(\tfrac{1-\phi}{2}+\tfrac{2\phi \a}{\a^2-1})   \mbox{ if } \tfrac{x_2}{\a}<x_1 < x_2
\end{cases}
\]
The payoff function $g_1$ is therefore increasing with $x_1$ in the region $[0,\frac{x_2}{\a}]$.
In the region $[\tfrac{x_2}{\a},x_2)$, the function $g_1$ decreases with $x_1$ when  $\tfrac{1-\phi}{2}-\tfrac{2\phi \a}{\a^2-1}<0$, which is equivalent to  
\[
\phi > \tfrac{\a^2-1}{\a^2+4\a-1}=\tfrac{1}{1+4\sqrt{a(1+a)}}=\Psi(a)
\]
It follows that the best response to $x_2$ for candidate~$1$ in the region $[0,x_2)$ is $x_1^*(x_2)=\tfrac{x_2}{\a}$.
If instead $x_1=x_2$, then according to Proposition~\ref{prop:r1=r2} the payoff $g_1$ is $\tfrac{1}{2}-\phi$, regardless of $x_2$. 
Note that by choosing $x_1=x_2-\epsilon$, the payoff $g_1$ would be $x_2(1-\phi)$, which is always greater than $\tfrac{1}{2}-\phi$ for any $x_2\geq \tfrac{1}{2}$ and $\phi>0$. Hence $x_1=x_2$ cannot be a best reply for candidate~$1$.

The same arguments apply to candidate~$2$ for any $x_1\leq \frac{1}{2}$.
The function $g_2$ is increasing with $x_2$ in the $[x_1,1-\tfrac{1-x_1}{\a}]$ region and decreasing with $x_2$ in the $[1-\frac{1-x_1}{\a},1]$ region, when $\phi >\Psi(a).$
Hence, the best response to $x_1$ for candidate~$2$ in the region $(x_1,1]$ is $x_2^*(x_1)=1-\tfrac{1-x_1}{\a}$. Choosing $x_2=x_1$ cannot be optimal as it is dominated by  $x_2=x_1+\epsilon$.

To conclude, when $\phi>\Psi (a)$ the function $g_1$ admits a global maximum in $x_1^*(x_2)=\tfrac{x_2}{\a}$ and the function $g_2$ admits a global maximum in $x_2^*(x_1)=1-\tfrac{1-x_1}{\a}$.
In equilibrium, both candidates best respond to each other, and these two equations provide the unique profile: $x_1^*=\tfrac{1}{\a+1}$ and $x_2^*=\tfrac{\a}{\a+1}$.
The payoffs associated with these locations are:
\[
g_1^*=g_2^* =\frac12-\frac{\phi}{2}\left( \frac{\a-1}{\a+1}\right)^2
\]


\subsection{Proof of Proposition~\ref{prop:noEQinR1}}\label{Subsect:noEQinR1}

According to the proof of Proposition~\ref{prop:first_stage}, since $\phi<\Psi(a)$, the payoff $g_1$ is strictly increasing with $x_1$ in the $[0,x_2)$ region. 
The unique possible equilibrium is then $x_1=x_2$, where the payoff is discontinuous.
At the limit $x_1 \rightarrow x_2$, both $\underline{m}$ and $\overline{m}$ converge to $x_2$, so: 
\[\lim\limits_{x_1\to x_2} g_1(x_1,x_2)= \lim\limits_{x_1\to x_2} \tfrac{x_1+x_2}{2} (1-\phi) + \phi(\overline{m} - \underline{m})=x_2(1-\phi)\]
On the other hand, the second stage equilibrium when $x_1=x_2$ yields a payoff of $g_1=\tfrac{1}{2} - \phi$.
The condition for equilibrium can then be written $x_2(1-\phi)\leq \tfrac{1}{2}-\phi$ or equivalently $x_2\leq \tfrac{\tfrac{1}{2}-\phi}{1-\phi}$.
Because $0<\phi<\tfrac{1}{2}$ the right hand side of the previous inequality is smaller than $\tfrac{1}{2}$, so $x_2<\tfrac{1}{2}$. 
By repeating the same calculation for candidate~$2$, we obtain that $x_1=x_2$ is an equilibrium only when $x_1> \tfrac{1}{2}$, which cannot hold with $x_1=x_2$ and $x_2<\tfrac{1}{2}$.

We now prove that if $\phi<\Psi(a)$, $(x_1,x_2)=(\frac{1}{2}-\epsilon,\frac{1}{2}+\epsilon)$ is an $\epsilon$-equilibrium. Based on the expression computed in subsection~\ref{Subsect:first_stage}, candidate~$1$'s payoff is  $g_1(\frac{1}{2}-\epsilon,\frac{1}{2}+\epsilon)=(\frac{1}{2}-\epsilon)(\tfrac{1-\phi}{2}-\tfrac{2\phi \a}{\a^2-1})+(\frac{1}{2}+\epsilon)(\tfrac{1-\phi}{2}+\tfrac{2\phi \a}{\a^2-1})=\frac{1-\phi}{4}+\epsilon \frac{4 \alpha \phi}{\alpha^2-1}$. On the other hand, because $\phi<\Psi(a)$,~$g_1(x_1,\frac{1}{2}+\epsilon)$ is increasing with $x_1$ in the region $[0,\frac{1}{2}+\epsilon)$, therefore we can bound from above the possible payoff of candidate~$1$ by $\displaystyle\lim_{x_1 \uparrow \frac{1}{2}+\epsilon}g(x_1,\frac{1}{2}+\epsilon)=\frac{1-\Phi}{4}+\epsilon \frac{1-\phi}{2}$. The loss of candidate~$1$ when playing $\frac12-\epsilon$ is
\begin{align*}
\lim\limits_{x_1 \rightarrow \frac12-\epsilon}g_1(x_1,\frac12+\epsilon) - g_1(\frac12-\epsilon,\frac12+\epsilon)
&=  \frac{1-\phi}{4}+\epsilon \frac{1-\phi}{2} -  \frac{1-\phi}{4}+\epsilon \frac{4 \alpha \phi}{\alpha^2-1}\\
&= \epsilon \left( \frac{1-\phi}{2}-\frac{4\alpha \phi}{\alpha^2-1}\right)\rightarrow 0 \text{ as } \epsilon \rightarrow 0.
\end{align*}

\subsection{Proof of Proposition~\ref{prop:implications}}\label{Subsect:prop_implications}
(\ref{p5-1}) Based on Proposition~\ref{prop:second_stage} and~\ref{prop:first_stage}, the election is open if and only if $m \in \left[\frac{2\a}{(\a+1)^2},\frac{\a^2+1}{(\a+1)^2}\right]$. On the other hand, $x_1^*$ is on the left of this interval and $x_2^*$ on the right. Because candidate $i$ eventually flip-flops to $\hat{y_i}=\frac{m+ax_i^*}{1+a}$, he flip-flops towards the center.

\noindent (\ref{p5-2}) At the mixed equilibrium played in the second stage, if the election is still open, the favorite candidate flip-flops with probability $1-\phi>\frac12$ and that the challenger flip-flops with probability $\phi<\frac12$.

\noindent (\ref{p5-3}) Based on Eq.~\eqref{Eq:opt_plat}, the optimal platforms $\hat{y_1}$ and $\hat{y_2}$ are the same weighted average of $m$ and respectively $x_1$ and $x_2$. If player $1$ is the favorite, $x_1$ is closer to $m$ than $x_2$. Therefore, the distance between $\hat{y_1}$ and $x_1$ is smaller than the distance between $\hat{y_2}$ and $x_2$.

\noindent (\ref{p5-4}) By point (\ref{p5-2}) and the fact that a favorite candidate $i$ always wins if he moves to $\hat{y_i}$.

\subsection{Proof of Proposition~\ref{prop:comparative}}\label{Subsect:prop_comparative}
\noindent (\ref{p6-1}) Based on Proposition~\ref{prop:first_stage}, equilibrium location are given by $\left(\frac{1}{1+\a},\frac{\a}{1+\a}\right)$. The distance between ex-ante platforms is therefore equal to $\frac{\a-1}{\a+1}$ which is increasing with $\a$, thus decreasing with $a$.

\noindent (\ref{p6-2}) Propositions~\ref{prop:second_stage} and~\ref{prop:first_stage} together prove that at equilibrium, the election is still open with probability $\left(\frac{\a-1}{\a+1}\right)^2$ which is also increasing with $\a$, thus decreasing with $a$. Conditionally on the election to be still open, the likelihood of flip-flopping does not depend on $a$ (it is $1-\phi$ and $\phi$ for the favorite and the challenger respectively).

\noindent (\ref{p6-3}) Proposition~\ref{prop:first_stage} gives that equilibrium payoffs are decreasing with the probability of the election to be still open $\left(\frac{\a-1}{\a+1}\right)^2$, so this claim is a corollary of claim (\ref{p6-2}).

\subsection{Proof of Proposition~\ref{prop:asym_a}}\label{Subsect:asym_a}

We follow the proof of the symmetric competition (Appendix~\ref{Subse:second_stage} and~\ref{Subsect:first_stage}) and discuss the differences. 
First, the optimal ex-post positions (given by Eq.~\eqref{Eq:opt_plat} in the symmetric case) differs with $a_i$:  
we find $\hat{y_i}=\tfrac{m+a_ix_i}{1+a_i}$. 
A candidate with a larger $a_i$ is more penalized by the voters and changes less his ex-ante platform than his opponent. 
Second, the interval of $m$ for which each candidate has secured the election generalizes to $(\underline{m},\overline{m})=(\tfrac{\a_2 x_1 - x_2}{\a_2-1} \vee 0,\tfrac{\a_2 x_1 + x_2}{\a_2+1})$ for candidate~$1$ and $(\underline{n},\overline{n})=(\frac{\a_1 x_2+x_1}{\a_1+1},\frac{\a_1 x_2-x_1}{\a_1-1}\wedge 1)$ for candidate~$2$, where $\a_i=\sqrt{\frac{1+a_i}{a_i}}$. 
Notice that the region where each candidate has secured the election does not depend on his own parameter $\a_i$ but on his opponent's.
The intuition is that the region where candidate $i$ has secured the election is the region where he wins the election without moving, so the penalty for his movement does not play a role. 
Moreover, the larger $a_j$, the smaller movement of candidate $j$ towards the median voter and therefore the region where candidate $i$ has secured the election is on a broader region of possible $m$.

Next, we solve the inequality $u_m^1(x_1,\hat{y}_1)>u_m^2(x_2,\hat{y}_2)$, which holds for $m<\tfrac{\a_1x_2+\a_2x_1}{\a_1+\a_2}:=\tilde{m}$.
Note that $\tilde{m}>\tfrac{x_1+x_2}{2}$ for $a_1<a_2$ with equality when $a_1=a_2$.
Hence, in the region $\left(\tfrac{x_1+x_2}{2},\tilde{m}\right)$ candidate~$2$ is favorite, the election is still open and candidate~$2$ cannot defend his advantage: when candidate~$1$ moves to $\hat{y}_1$, he wins the election regardless of the action of candidate~$2$.
Since candidate~$1$ loses the election if he does not move, moving is a dominating strategy and in this region the optimal strategy for candidate~$1$ is to move, for candidate~$2$ not to move and the payoff is $(1-\phi,0)$.

Note that the organizational costs are unchanged, so the second stage remains strictly identical to Proposition~\ref{prop:second_stage} for $m\notin \left( \tfrac{x_1+x_2}{2}, \tilde{m}\right)$. 
Moreover, the payoff in $\left( \tfrac{x_1+x_2}{2}, \tilde{m}\right)$ is exactly the same as the expected payoff in the region where candidate~$1$ is favorite and the election is still open, so, although the second stage strategy is different, in terms of continuation payoff we can analyze the first stage as if candidate~$1$ is favorite and the election is still open in the region $[\overline{m},\tilde{m}]$ instead of $[\overline{m},\tfrac{x_1+x_2}{2}]$.

Finally, repeating the argument in Appendix~\ref{Subsect:first_stage}, we find that it is necessary and sufficient to have $\phi>\Psi(a_1)$ to guarantee the existence of a best response of candidate~$1$, which is given by $x_1(x_2)=\tfrac{x_2}{\a_2}$, and analogously, we have $x_2(x_1)=1-\tfrac{1-x_1}{\a_1}$ when $\phi>\Psi(a_2)$.
Together, we conclude that if $\phi>\max(\Psi(a_1),\Psi(a_2))$, then $x_1^*=\frac{\a_1-1}{\a_1\a_2-1}$ and $x_2^*= \frac{\a_2(\a_1-1)}{\a_1\a_2-1}$.
If any of the thresholds is higher than the organizational cost, there exists no equilibrium.

\bibliographystyle{plainnat}
\bibliography{bib_costly_adjustments}

\end{document}